\newtheorem{lemma}{Lemma}
\def\terrain{\Sigma}
\def\mesh{\mathbb{M}}
\def\height{h}
\def\star{\mathrm{St}}  
\def\link{\mathrm{Lk}}
\def\descendtree{\Pi^{\downarrow}}
\def\ascendtree{\Pi^{\uparrow}}
\def\lowerlink{\mathrm{Lk^{-}}} 
\def\upperlink{\mathrm{Lk^{+}}} 
\def\level{\ell}
\def\levelset{\mesh_{\ell}}
\def\contourtree{\mathbb{T}}
\def\jointree{\mathbb{J}}
\def\retract{\rho}
\def\jointree{\EuScript{J}}
\def\sublevel{\mesh_{< \level}} 
\def\superlevel{\mesh_{> \level}}
\def\changeheight{\textsc{ChangeHeight}}
\def\insertv{\textsc{Insert}}
\def\delete{\textsc{Delete}}
\def\edgeflip{\textsc{EdgeFlip}}
\def\plane{\mathbb{R}^2}
\def\seqedge{\mathbb{E}}
\def\vertices{V}
\def\edges{E}
\def\faces{F}
\newcommand{\OhOf}{\mathrm{O}}
\newcommand{\thomas}[1]  {}
\newcommand{\pankaj}[1]  {}
\newcommand{\morten}[1]  {}
\newcommand{\lars}[1] {}
\newcommand{\jungwoo}[1]{ }
\newcommand{\heighttime}[1]{\height_{#1}}
\newcommand{\contourdeform}[0]{H}
\newcommand{\cycle}[0]{K} 
\newcommand{\contour}[0]{C} 
\newcommand{\ada}[0]{\xi}
\newcommand{\adb}[0]{\eta}
\newcommand{\bua}[0]{\zeta}
\newcommand{\bub}[0]{\omega}
\newcommand{\R}{\mathbb{R}}
\renewcommand{\paragraph}[1]{\noindent\textbf{#1.}}
\newcommand{\footremember}[2]{%
   \footnote{#2}
    \newcounter{#1}
    \setcounter{#1}{\value{footnote}}%
}
\newcommand{\footrecall}[1]{%
    \footnotemark[\value{#1}]%
} 
\title{Maintaining Contour Trees of Dynamic Terrains}
\author{
Pankaj K. Agarwal\footnote{Work supported by NSF under grants CCF-09-40671, CCF-10-12254, and CCF-11-61359, by Grant 2012/229 from the U.S.-Israel Binational Science Foundation, and by an ERDC contract W9132V-11-C-0003.}${\,\,}^,$\footremember{sttr}{Supported by U.S. Army Research Office contract W911NF-13-P-0018.} \\ 
Computer Science \\
Duke University \\
\url{pankaj@cs.duke.edu}
\and
Lars Arge\footremember{madalgo}{Center for Massive Data Algorithmics, a Center of the Danish National Research Foundation.} \\
MADALGO\\
Aarhus University\\
\url{large@madalgo.au.dk}
\and
Thomas M\o{}lhave\footrecall{sttr}\\
SCALGO USA\\
\url{thomas@scalgo.com}
\and
Morten Revsb\ae{}k\footrecall{madalgo}\\
MADALGO\\
Aarhus University\\
\url{mrevs@madalgo.au.dk}
\and
Jungwoo Yang\footrecall{madalgo}\\
MADALGO\\
Aarhus University\\
\url{jungwoo@madalgo.au.dk}
}
\date{}
\begin{document}
\maketitle

\begin{abstract}
We consider maintaining the contour tree $\contourtree$ of a
piecewise-linear triangulation $\mesh$ that is the graph of a time
varying height function $\height: \mathbb{R}^2 \rightarrow
\mathbb{R}$. We carefully describe the combinatorial change in
$\contourtree$ that happen as $\height$ varies over time and how these
changes relate to topological changes in $\mesh$. We present a kinetic
data structure that maintains the contour tree of $h$ over time. Our
data structure maintains certificates that fail only when $h(v)=h(u)$
for two adjacent vertices $v$ and $u$ in $\mesh$, or when two saddle
vertices lie on the same contour of $\mesh$. A certificate failure is
handled in $\OhOf(\log(n))$ time. We also show how our data structure
can be extended to handle a set of general update operations on
$\mesh$ and how it can be applied to maintain topological persistence
pairs of time varying functions.
\end{abstract}

\newpage

\section{Introduction}
Within computational geometry, GIS and spatial databases there has
been extensive work on developing terrain algorithms for modeling and
analyzing the surface of the earth represented as a piecewise-linear
triangulated mesh $\mesh$ also known as a \emph{triangulated irregular
  network} (TIN). This mesh can be regarded as the graph of a
piecewise-linear \emph{height function} $\height: \mathbb{R}^2
\rightarrow \mathbb{R}$. Some of the main application areas have been
flood risk analysis, visibility analysis and visualization. In the
last decade this work has been increasingly fuelled by significant
advancements within remote sensing technologies such as LiDAR. These
technologies enable mapping the surface of the earth in increasingly
high (submeter) resolution for very large areas which has enabled
increasingly detailed analysis and visualisation.

The surface of the earth is continuously changing as the result of
both natural processes and human activity. Increasingly, this
continuous change is captured by surface representations due to
technological development within remote sensing and unmanned aerial
vehicles that enable low cost, rapid and repeated mapping of the
surface of the earth. This development transforms $\mesh$ from a
static to a time-varying object. As $\mesh$ varies, any derived
analysis result has to be updated accordingly to ensure the veracity
of the analysis. The frequency of surface updates and the size of
surface representations prohibits simply recomputing information
derived from the terrain as updates appear. This opens an exciting new
frontier within the study of algorithms for terrain analysis where the
objective is to efficiently update analysis results as $\mesh$ varies
over time.

Many terrain analysis algorithms rely on explicitly representing and
analysing the topology of $\mesh$~\cite{danner:terrastream,
  arge:socg2010,carr:cg2010}. The most general and widely used
representation of terrain topology is the \emph{contour tree}. Let
$\levelset$ be the intersection of $\mesh$ with a horizontal plane at
height $\level$, then intuitively the contour tree $\contourtree$
represents how the topology of $\levelset$ evolves as $\level$ goes
towards infinity i.e. how the connected components (\emph{contours})
of $\levelset$ appear, merge, split and disappear (see
Section~\ref{sec:preliminaries} for a formal definiton). In this paper
we will show how to efficiently update $\contourtree$ as $\mesh$
varies over time. Since the contour tree is a widely used building
block in many modeling and analysis algorithms this provides an
important step towards efficiently maintaining the output of these
algorithms as $\mesh$ varies over time.


\subsection{Related Work}
The first efficient algorithm for constructing contour trees of
piecewise-linear height functions on $\mathbb{R}^2$ was given by Van
Kreveld et al.~\cite{kobps-ctsss-97} and used $\OhOf(n\log(n))$
time. This algorithm was later extended to $\mathbb{R}^3$ by Tarasov
and Vyalyi in~\cite{tv-cct3s-98}, and to arbitrary dimensions by Carr
et al.\cite{Carr:03}. An out-of-core or socalled I/O-efficient
algorithm for constructing contour trees of terrain representations
that does not fit in main memory, was given by Agarwal et
al.~\cite{Agarwal:06}. The \emph{Reeb Graph} is a generalization of
the contour tree to manifolds of any dimension. Algorithms have also
been presented for efficiently constructing Reeb
Graphs.

The study of kinetic data structures that track attributes of time
varying geometric systems is well established and a substantial amount
of work has been performed within this field. Refer to
\cite{guibas:cghandbook} for a survey of this work. Specifically there
has been recent work on maintaining Reeb Graphs and Contour Trees of
time varying manifolds. In \cite{edelsbrunner:socg2008}, Edelsbrunner
et al. describe an algorithm for maintaining the Reeb Graph of time
varying 3-manifolds. They show that if $h$ is a smooth function, then
the combinatorial structure of the Reeb Graph only changes in discrete
events when (i) a critical point $u$ becomes degenerate (i.e. the
Hessian at $u$ becomes singular), or (ii) $\height(u)=\height(v)$ for
two saddle points $u$ and $v$ and both $u$ and $v$ lie on the same
contour. If $\height$ is a piece-wise linear function, (i) corresponds
to two adjacent vertices $u$ and $v$ of $\mesh$ with $\height(u) =
\height(v)$, and (ii) corresponds to two saddle vertices lying on the
same contour. Edelsbrunner et al. maintain a set of certificates that
fail only when the combinatorial structure of the Reeb Graph
changes. For every certificate failure their algorithm requires
$\OhOf(n)$ time to restore the combinatorial structure, where $n$ is
the number of vertices in $\mesh$. Wang and
Safa~\cite{safa-mpcttvfm-14} suggest an algorithm for maintaining
contour trees of time varying piecewise linear 2-manifolds. This
algorithm handles certificate failures in $\OhOf(\log(n))$ time,
however, they need to process a much larger number of certificate
failures since a certificate fails each time any two vertices of
$\mesh$ lie on the same contour. Their algorithm also works for simple
3-manifolds where the Reeb Graph is a contour tree.

\subsection{Our Results}
We describe an algorithm for maintaining contour trees of time varying
2-manifolds. Our algorithm processes certificate failures in
$\OhOf(\log(n))$ time and certificates only fail when $\height(u) =
\height(v)$ for two adjacent vertices $u$ and $v$ in $\mesh$, or when
two saddle vertices lie on the same contour. We maintain an auxiliary
data structure that maps a vertex in $\mesh$ to its corresponding edge
in $\contourtree$ in $\OhOf(\log(n))$ time. This data structure is
maintained through certificates that fail as $\height(u) = \height(v)$
for two adjacent vertices $u$ and $v$ in $\mesh$.

For simplicity, we focus the description of our data structure on the
operation $\changeheight(v,r)$ that change the height of $v$ in
$\mesh$ to $r$. Our approach easily generalizes to a setting where all
of $\height$ varies. We provide a very detailed description of the
combinatorial changes that occur in $\contourtree$ as the height of
$v$ varies and how these changes relate to topological changes in
$\height$. Specifically, we describe how the color of a contour on
$\height$ transitions during combinatorial changes in
$\contourtree$. We show how our data structure can be used to maintain
$\contourtree$ under an extended set of operations on $\mesh$ such as
vertex insertion, vertex deletion and edge flip. Finally, we show that
our algorithm can be used to maintain topological persistence pairs of
$\height$ as $\height$ varies over time.

\section{Preliminaries} \label{sec:preliminaries}

\paragraph{Terrains}
Let $\mesh = (\vertices, \edges, \faces)$ be a triangulation of $\mathbb{R}^2$,
with vertex, edge, and face (triangle) sets $V$, $E$, and $F$, respectively,
and let $n=|V|$. We assume that $\vertices$ contains a vertex $v_{\infty}$ at
infinity, and that each edge $\{u, v_\infty\}$ is a ray emanating from $u$; the
triangles in $\mesh$ incident to $v_{\infty}$ are unbounded. Let $\height:
\mathbb{R}^2 \rightarrow \mathbb{R}$ be a continuous {\em height function} with
the property that the restriction of $\height$ to each triangle of $\mesh$ is a
linear map that for unbounded triangles approaches $-\infty$ at $v_{\infty}$
such that $h(v_{\infty}) = -\infty$. Given $\mesh$ and $\height$, the graph of
$\height$, called a {\em terrain} and denoted by $\terrain$, is as a
$xy$-monotone triangulated surface whose triangulation is induced by $\mesh$.
That is, vertices, edges, and faces of $\terrain$ are in one-to-one
correspondence with those of $\mesh$ and with a slight abuse of terminology we
refer to $\vertices$, $\edges$, and $\faces$, as vertices, edges, and triangles
of both the terrain $\terrain$ and $\mesh$. We assume that $\height(u) \neq
\height(v)$ for vertices $u,v \in V$ such that $u \neq v$.

\paragraph{Critical points}
For a vertex $v$ of $\mesh$, the \emph{star} of $v$, denoted by $\star(v)$,
consists of all triangles incident to $v$. The \emph{link} of $v$, denoted by
$\link(v)$, is the boundary of $\star(v)$, i.e. the cycle formed by edges that
are not incident on $v$ but belong to triangles that are in $\star(v)$. The
lower (resp. upper) link of $v$, $\lowerlink(v)$ (resp. $\upperlink(v)$), is
the subgraph of $\link(v)$ induced by vertices $u$ with $\height(u) <
\height(v)$ (resp. $\height(u) > \height(v)$). 

A \emph{minimum} (resp. \emph{maximum}) of $\mesh$ is a vertex $v$ for
which $\lowerlink(v)$ (resp. $\upperlink(v)$) is empty. A maximum or a
minimum vertex is called an \emph{extremal} vertex. A non-extremal
vertex $v$ is \emph{regular} if $\lowerlink(v)$ (and also
$\upperlink(v)$) is connected, and saddle otherwise; see
Figure~\ref{fig:vertex-types}. A vertex that is not regular is called
a \emph{critical} vertex. For simplicity, we assume that each saddle
vertex $v$ is \emph{simple}, meaning that $\lowerlink(v)$ and
$\upperlink(v)$ consists of only two components.

\paragraph{Level sets and contours}
For $\level \in \mathbb{R}$, the {\em $\level$-level set\/}, {\em
$\level$-sublevel set\/} and {\em $\level$-superlevel set\/} of $\mesh$ are
subsets $\levelset$, $\sublevel$, $\superlevel$ of $\plane$ consisting of
points $x$, with $\height(x) = \level$, $\height(x) < \level$, and $\height(x) >
\level$ respectively. Similarly, the closed {\em $\level$-sublevel\/} (resp.
{\em $\level$-superlevel\/}) set of $\mesh$ consists of points in $\plane$
with $\height(x) \leq \level$ (resp. $\height(x) \geq \level$). We will refer
to a level set $\levelset$ where $\level = \height(v)$ for some critical vertex
$v$ as a \emph{critical level}. A {\em contour} of $\mesh$ is a connected
component of a level set of $\mesh$.  Each vertex $v \in V$ is contained in
exactly one contour in $\mesh_{h(v)}$, which we call {\em the contour of
$v$}. A contour not passing through a critical vertex is a simple polygonal
cycle with non-empty interior. A contour passing through an extremal vertex is
a single point, and by our assumption, a contour passing through a saddle
consists of two simple cycles with the saddle vertex being their only
intersection point. A contour $C$ not passing through a vertex can be
represented by the circular sequence of edges of $\mesh$ denoted by
$\seqedge(C)$ that it passes through. Two contours are called
\emph{combinatorially identical} if their cyclic sequences are the same.

Let $\epsilon = \epsilon(\terrain)$ denote a sufficiently small
positive value, in particular, smaller than height difference between
any two vertices of $\terrain$.  An \emph{up-contour} of a saddle
vertex $\alpha$ is any contour of
$\mesh_{\height(\alpha)+\epsilon}$ that intersects an edge incident
on $\alpha$. Similarly, a \emph{down-contour} of $\alpha$ is any
contour of $\mesh_{\height(\alpha)-\epsilon}$ that intersects an
edge incident on $\alpha$. If $\alpha$ has two up-contours and one
down-contour it is called a \emph{positive saddle vertex}. If it has
two down-contours and one up-contour it is called a \emph{negative
  saddle vertex}. All simple saddles are either negative or positive.

\paragraph{Red and blue contours and saddle vertices}
A contour $\contour$ of $\levelset$ is called \emph{blue} if points
locally in the interior of $\contour$ belong to $\sublevel$ and \emph{red}
otherwise. We associate a color with a positive (resp. negative)
saddle vertex given by the color of its unique down-contour
(resp. up-contour). Refer to Figure \ref{figure:saddle-types} to see
the possible saddle colors.
\begin{figure}
\centering
\begin{tabular}{cccc}
\includegraphics[width=0.1\textwidth,page=1]{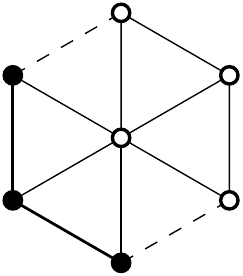} 
& \includegraphics[width=0.1\textwidth,page=2]{figures/vertex-types.pdf} 
& \includegraphics[width=0.1\textwidth,page=3]{figures/vertex-types.pdf} 
& \includegraphics[width=0.1\textwidth,page=4]{figures/vertex-types.pdf} \\
regular & minimum & saddle & maximum  
\end{tabular}
\caption{The types of a vertex; lower (resp. upper) link vertices are depicted by filled (resp. hollow) circles.}
\label{fig:vertex-types}
\end{figure}

\begin{figure}
\centering
\begin{tabular}{|>{\centering\arraybackslash}m{2.3cm}|>{\centering\arraybackslash}m{2.3cm}|>{\centering\arraybackslash}m{4.8cm}|>{\centering\arraybackslash}m{4.8cm}|}
\hline
Type & Contour tree & Terrain & Contour \\
\hline
Red Positive
& \includegraphics[width=0.1\textwidth,page=2]{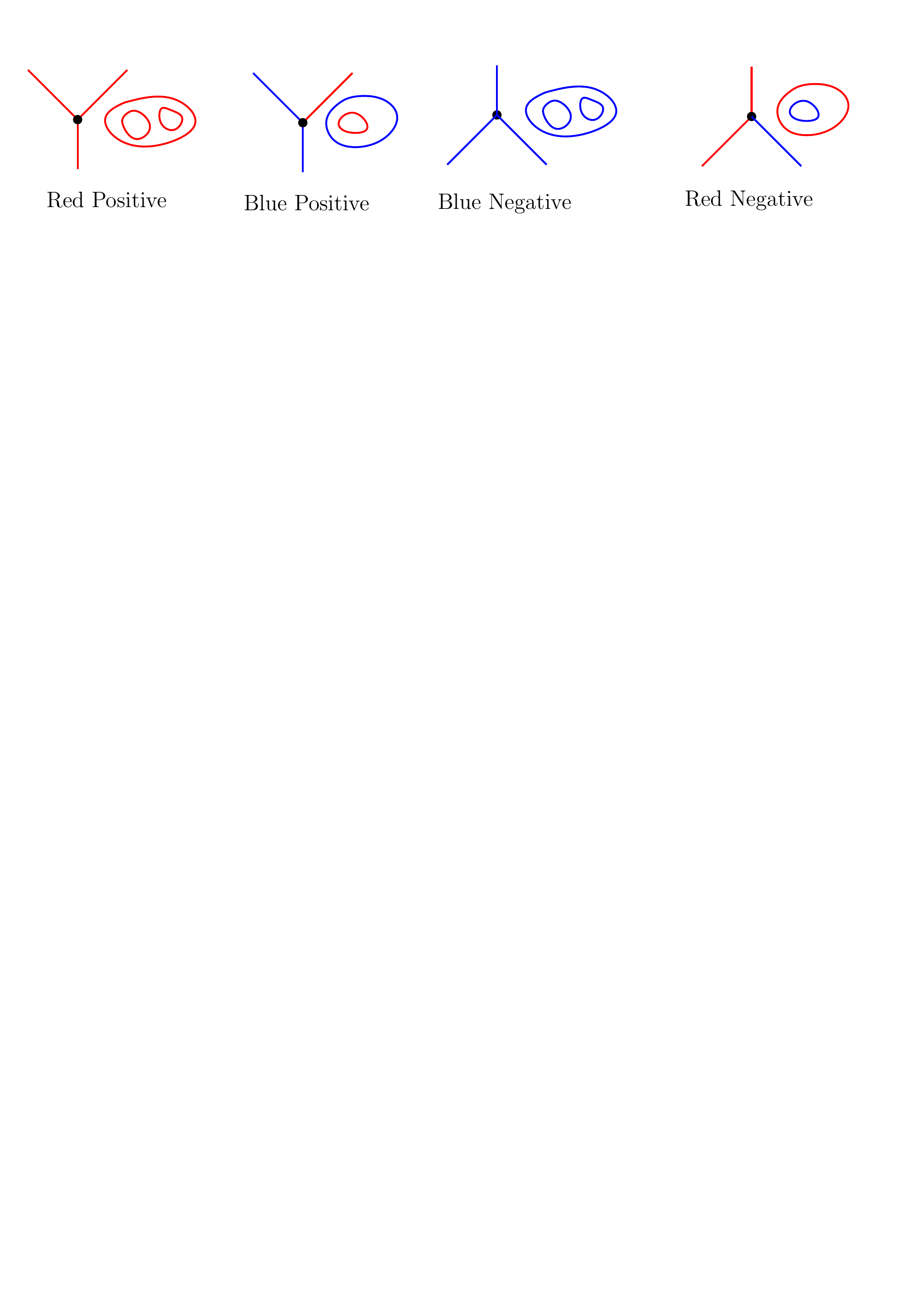}
& \includegraphics[width=0.3\textwidth]{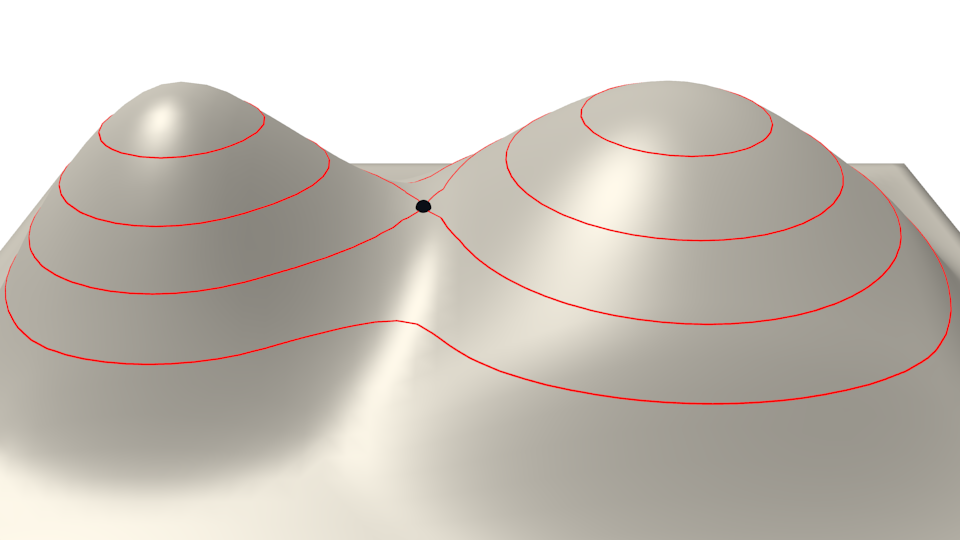}
& \includegraphics[width=0.3\textwidth]{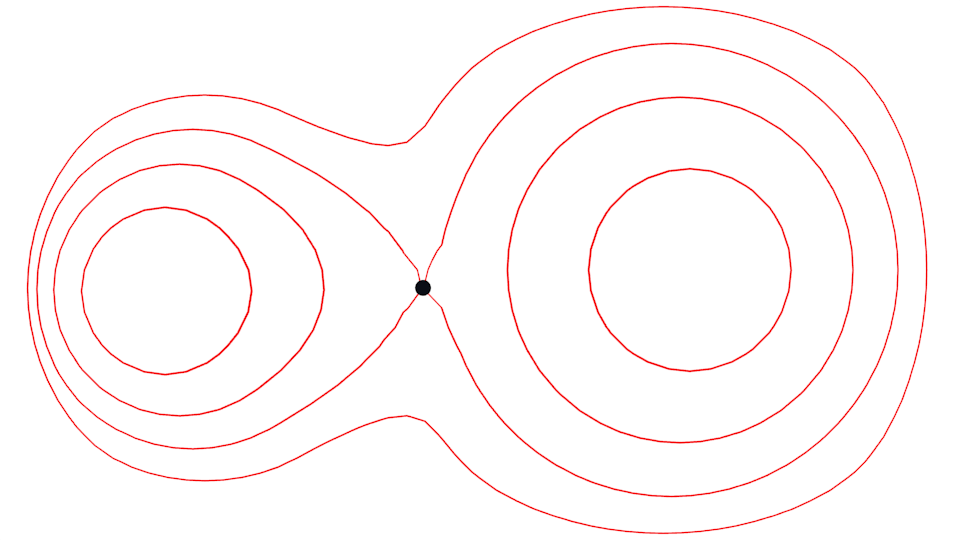} \\
\hline
Blue Positive
& \includegraphics[width=0.1\textwidth,page=3]{figures/saddle-types.pdf}
& \includegraphics[width=0.3\textwidth]{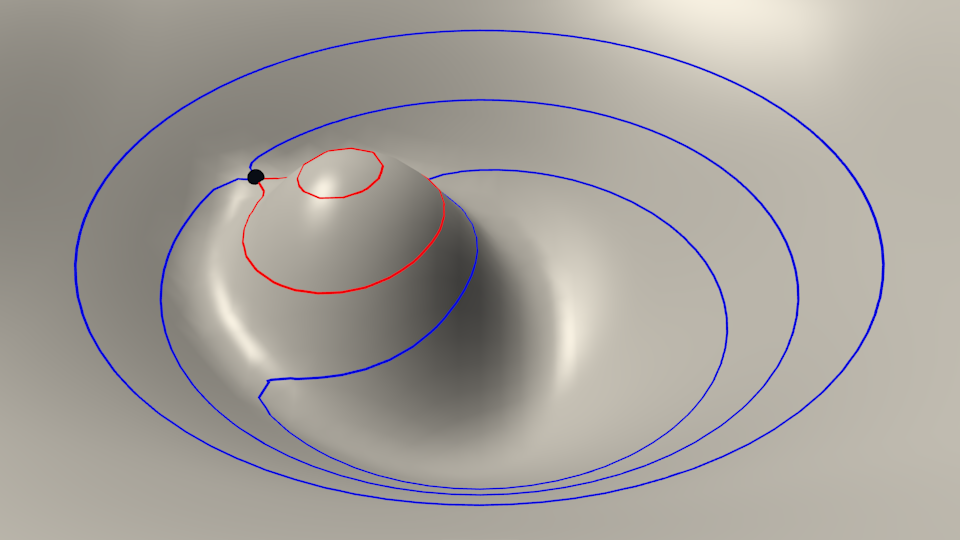}
& \includegraphics[width=0.3\textwidth]{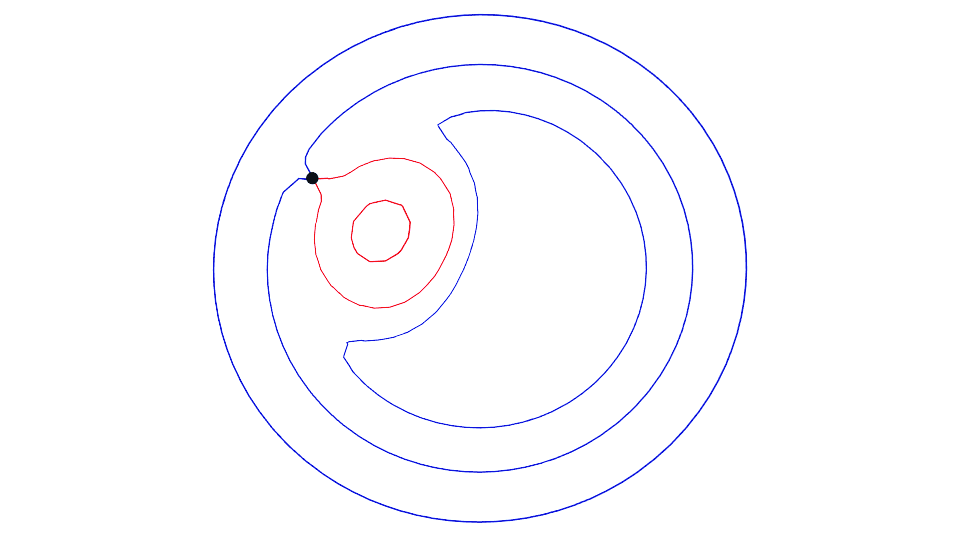} \\
\hline
Blue Negative
& \includegraphics[width=0.1\textwidth,page=4]{figures/saddle-types.pdf}
& \includegraphics[width=0.3\textwidth]{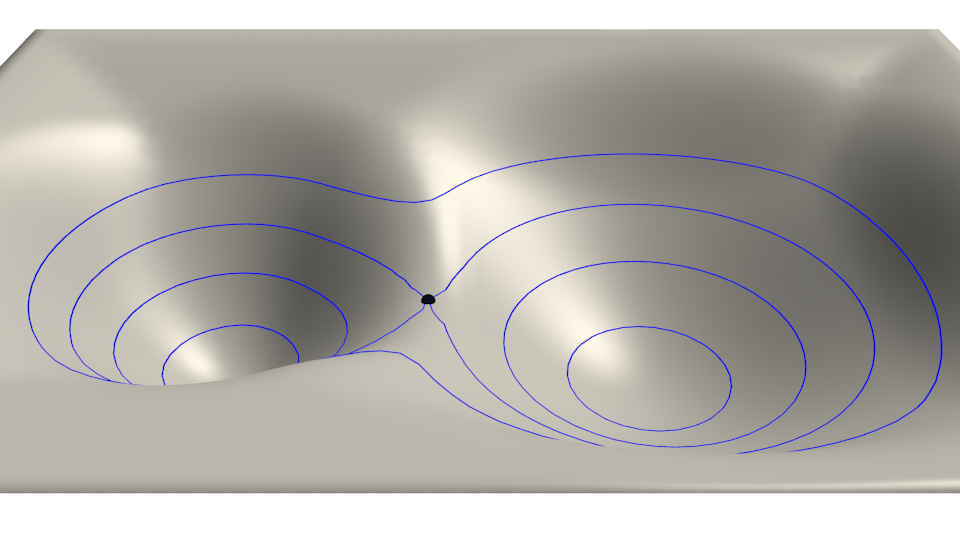}
& \includegraphics[width=0.3\textwidth]{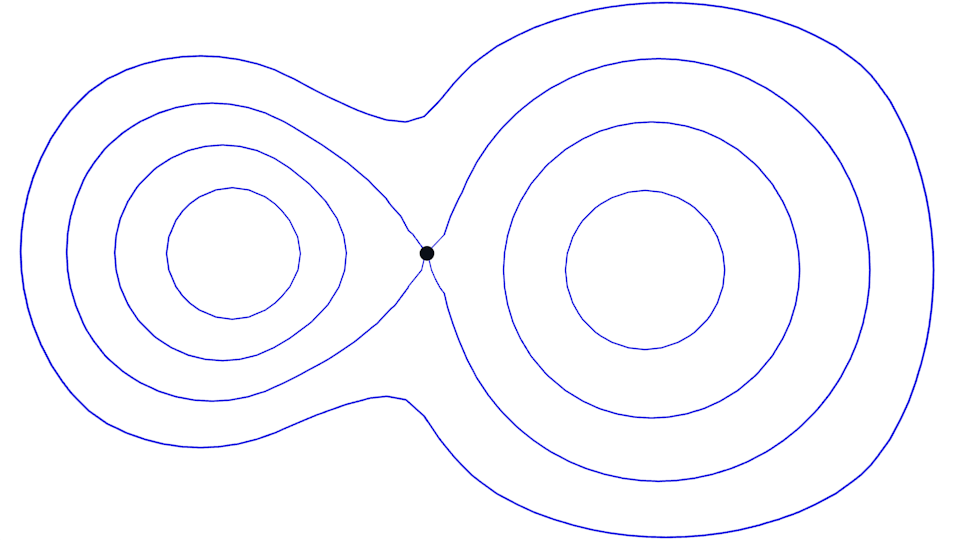} \\
\hline
Red Negative
& \includegraphics[width=0.1\textwidth,page=5]{figures/saddle-types.pdf}
& \includegraphics[width=0.3\textwidth]{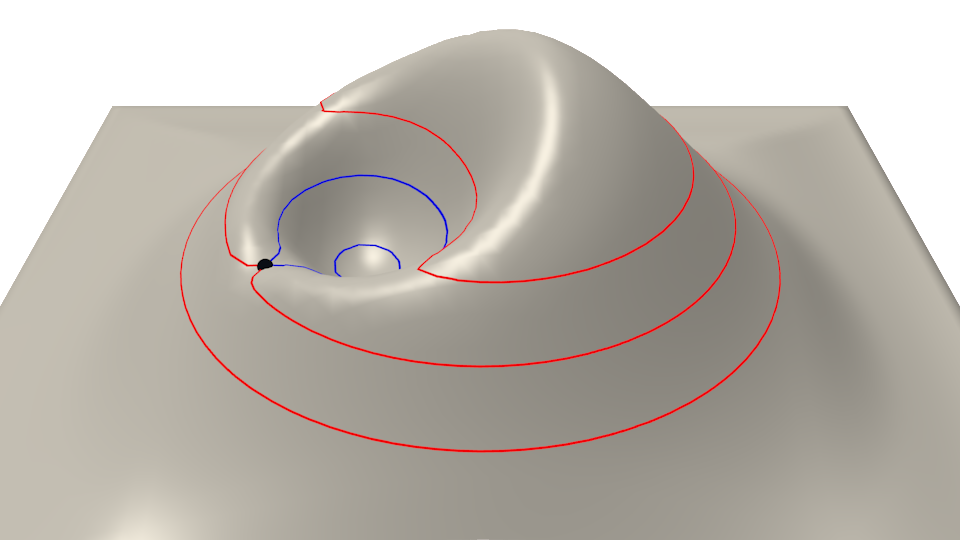}
& \includegraphics[width=0.3\textwidth]{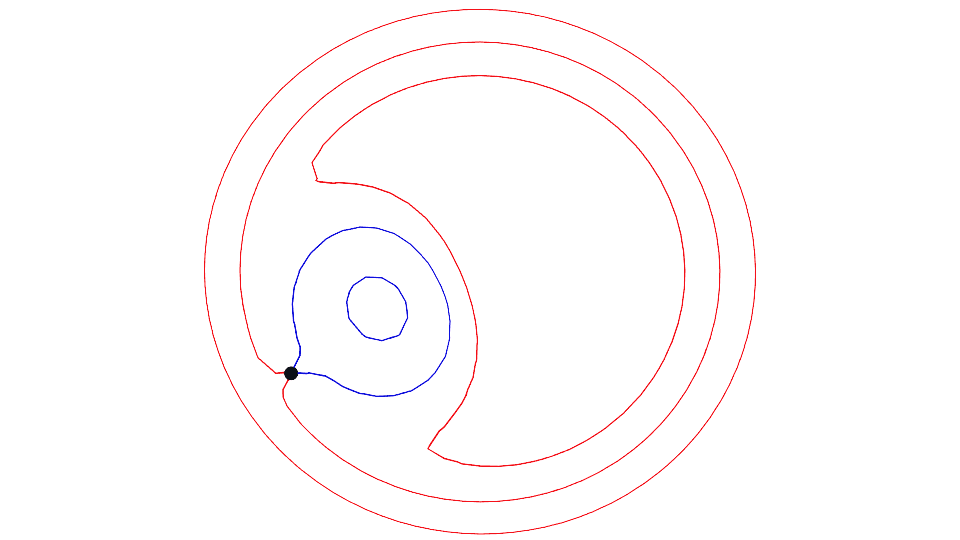} \\
\hline
\end{tabular}
\caption{Saddle Types}
\label{figure:saddle-types}
\end{figure}

\paragraph{Contour trees}
Consider raising $\level$ from $-\infty$ to $\infty$, between critical
levels the contours continuously deform but no changes happen to the
topology of the level set. At a minimum vertex, a new contour is
created. At a maximum vertex an existing contour contracts into a
single point and disappears. At a positive saddle vertex $v$, an
existing contour (the down-contour of $v$) is split into two new
contours (the up-contours of $v$) and at a negative saddle vertex $v$,
two contours (the down-contours of $v$) merge into one contour (the
up-contour of $v$).  The \emph{contour tree} $\contourtree$ of
$\terrain$ is a tree on the critical vertices of $\terrain$ that
encodes these topological changes of the level set. An edge $(v,w)$ of
$\contourtree$ \emph{represents} the contour that appears at $v$ and
disappears at $w$. 

More formally, two contours $\contour_1$ and $\contour_2$ at levels
$\ell_1$ and $\ell_2$, respectively, are called $\emph{equivalent}$ if
$\contour_1$ and $\contour_2$ belong to the same connected component
of $\Gamma = \{x \in \mathbb{R}^2 \mid \ell_1 \leq \height(x) \leq
\ell_2\}$ and $\Gamma$ does not contain any critical
vertex. An equivalence class of contour starts and ends at critical
vertices. If it starts at a critical vertex $v$ and ends at $w$, then
$(v,w)$ is an edge in $\contourtree$. We refer to $v$ as a \emph{down
  neighbor} of $w$ and to $w$ as an \emph{up neighbor} of
$v$. Equivalently $\contourtree$ is the quotient space in which each
contour is represented by a point and connectivity is defined in terms
of the quotient topology. Let $\retract : \mesh \rightarrow
\contourtree$ be the associated quotient map, which maps all points of
a contour to a single point on an edge of $\contourtree$. Note that
for any point $p$ on $\mesh$ that does not correspond to a critical
vertex, $\retract(p)$ is interior (not an endpoint) of a single edge
in $\contourtree$, if $p$ corresponds to an extremum vertex
$\retract(p)$ is the endpoint of a single edge in $\contourtree$, and
if $p$ corresponds to a saddle vertex then $\retract(p)$ is the
endpoint of several edges of $\contourtree$.

We will assume that each vertex of the contour tree
is labeled with the corresponding critical vertex of $\terrain$. The
combinatorial description of the contour tree is the set of vertices,
their labels, and the set of edges. It will also be convenient to
think of the contour tree embedded in $\terrain$, where the
coordinates of a vertex of the tree is the same as those of the
corresponding vertex in $\terrain$.  With a slight abuse of notation
we do not distinguish between the combinatorial structure and its
embedding in 3D.

\begin{figure}
\centering{
\begin{tabular}{ccc}
\includegraphics[width=0.4\textwidth]{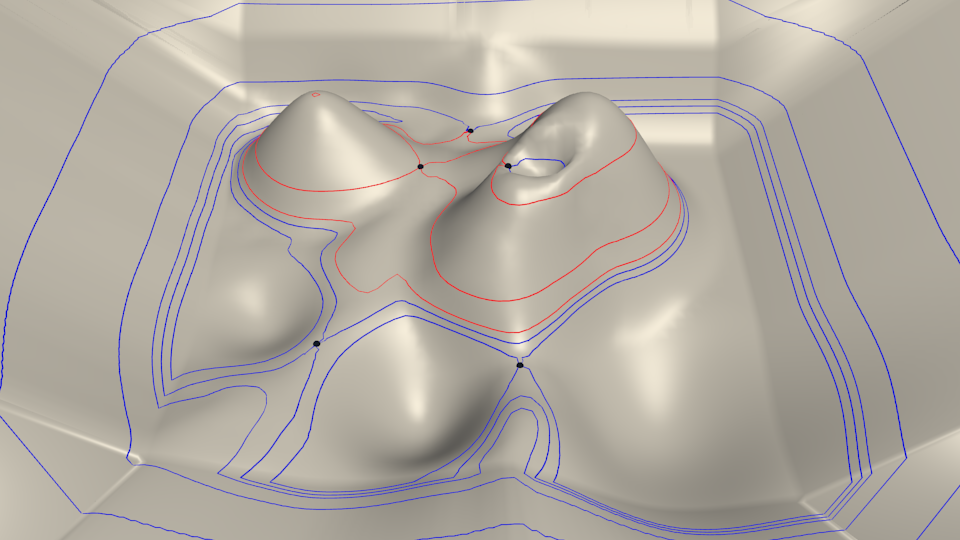} &
\includegraphics[width=0.35\textwidth,page=3]{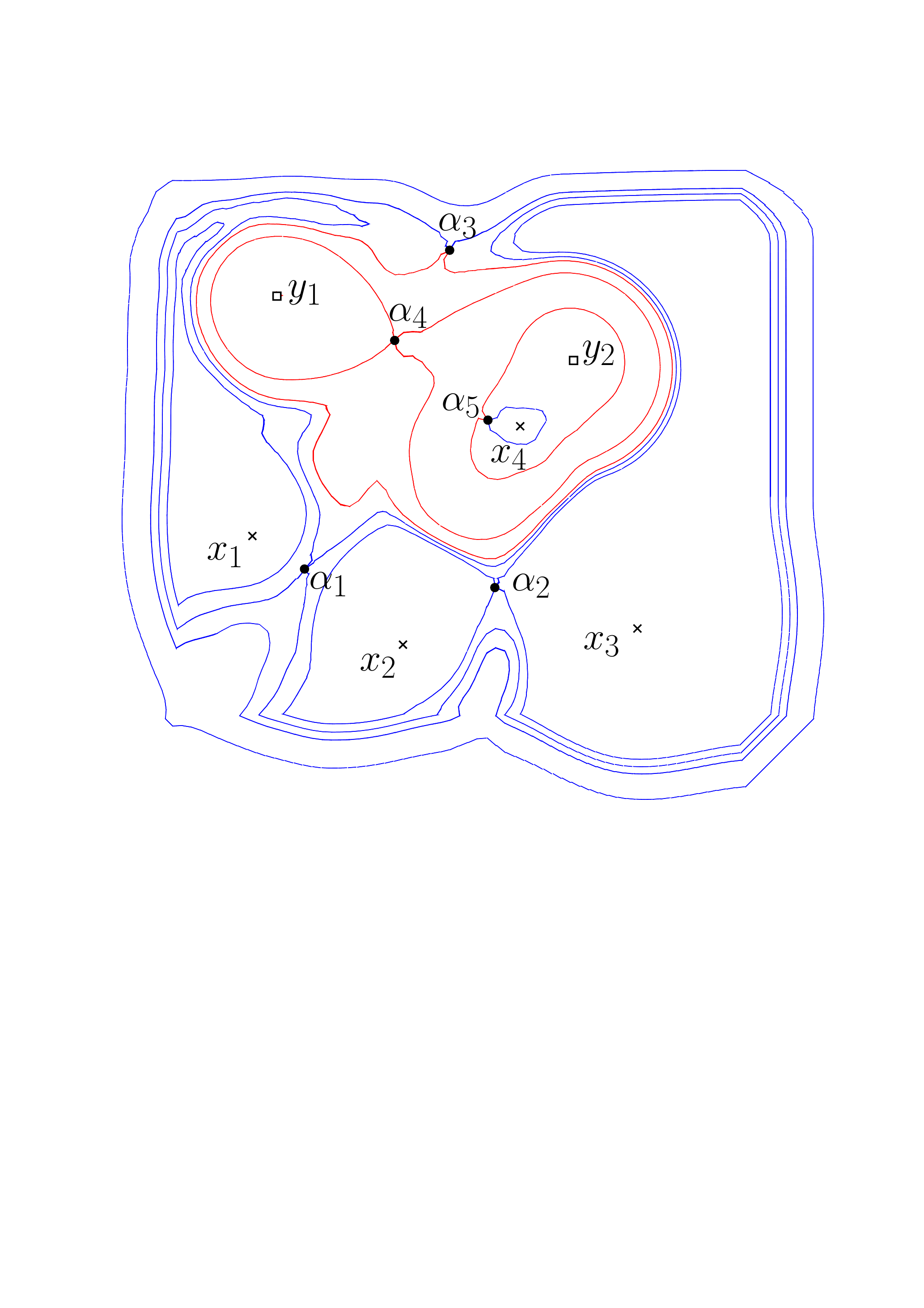} &
\includegraphics[width=0.18\textwidth,page=5]{figures/natural-terrain.pdf} \\
(a) & (b) & (c)
\end{tabular}
}
\caption{(a) (b) An example terrain depicted with contours through saddle
  vertices and showing the critical vertices of the terrain. (c) The
  contour tree of the terrain in (a).}
\label{fig:terrain_min_max}
\end{figure}

\paragraph{Ascent and descent trees}
Consider the graph on the vertices of $\terrain$ given by creating an
edge $(w,v)$ from each vertex $v$ to a single vertex $w$ in
$\lowerlink(v)$ unless $v$ is a minimum vertex, in which case no edge
is created. This graph is then a forest of trees rooted in the minimum
vertices of $\terrain$. We refer to these trees as \emph{descent
  trees} and denote the descent tree rooted in $x$ as
$\descendtree(x)$. Similarly, we can define a forest of \emph{ascent
  trees} given by creating the edge $(v,w)$ from every vertex $v$ to a
single vertex $w$ in $\upperlink(v)$ unless $v$ is a maximum, in which
case no connection is made. Each ascend tree is rooted in a maximum
vertex $y$ and denoted $\ascendtree(y)$. Note that both the descent
tree forest and the ascent tree forest partitions the vertices in
$\terrain$ and that every vertex belongs to exactly one descent tree
and one ascent tree.

\section{Continuous Height Change} \label{sec:deformation}
Our algorithm maintains the contour tree $\contourtree$ of $\terrain$
while allowing the height of vertices in $\terrain$ to be
changed. More formally, we support the $\changeheight(v,r)$ operation,
that given a vertex $v$ in $\terrain$ changes the height of $v$ to $r$
while updating $\contourtree$ to reflect any topological changes that
might be caused by the height change.  Besides $\contourtree$, we also
maintain a descent tree $\descendtree(x)$ for every minimum $x$ and an
ascent tree $\ascendtree(y)$ for every maximum $y$ of
$\terrain$. These are auxiliary structures that we later on will use
to efficiently map between vertices of $\terrain$ and leaves of
$\contourtree$.

We process a $\changeheight(v,r)$ operation as a continuous deformation of
$\terrain$ over time, that changes the height of $v$ to $r$. During this
continuous deformation the combinatorial structure of $\contourtree$ changes
only at discrete time instances, called \emph{events}. For simplicity, we
assume that no multiple saddle vertex is created during this deformation.

Edelsbrunner et al.~\cite{edelsbrunner:socg2008} showed that if $h$ is a smooth function, then the
topology of $\contourtree$ changes when (i) a critical point $u$ becomes
degenerate (i.e. the Hessian at $u$ becomes singular), or (ii)
$\height(u)=\height(v)$ for two saddle points $u$ and $v$ and both $u$ and $v$
lie on the same contour. In either case, one of the edges of $\contourtree$ is
degenerate in the sense that the interval $[h(u), h(v)]$ is a single point.
In our setting, where $h$ is a piecewise-linear function, (i)
corresponds to two adjacent vertices $u$ and $v$ with $\height(u)=\height(v)$
and one of them being a saddle and the other an extremal vertex; (ii)
corresponds to two saddle vertices lying on the same contour. The former is
called a \emph{birth} or a \emph{death} event and the latter is called an
\emph{interchange} event.

Besides these two events, there is another event in the
piecewise-linear case, namely, a critical point shifts from one vertex
to its neighbor --- no new critical points are created, none is
destroyed, and the topology of $\contourtree$ does not change. Only
the label of a node in $\contourtree$ changes. We refer to this event
as the \emph{shift} event.  Finally, the ascent and descent trees also
change at certain time instances, when an oriented edge $(u,v)$ will
be replaced with another edge $(u,w)$ or with the edge $(v, u)$.  The
birth, death, shift, and auxiliary events will be referred to as
\emph{local} events because they only occur when the height of two
adjacent vertices becomes equal.

Before we describe the events in detail we introduce some
notation. During the deformation, we use $\heighttime{t} : \mesh
\rightarrow \mathbb{R}$ to denote the height function at time
$t$. Note that during $\changeheight(v, r)$, $\heighttime{t}$ changes
only for points in $\star(v)$. If an event occurs at time $t$, then we
refer to $t^-$ (resp. $t^+$) as the time $t-\varepsilon$ (resp.
$t+\varepsilon)$ for some arbitrarily small $\varepsilon > 0$. We will
use $\contour_{\alpha\beta}^-$ ($\contour_{\alpha\beta}^+$) to refer
to a contour that retracts to the interior of edge $(\alpha,\beta)$ in
$\contourtree$ at time $t^-$ ($t^+$). A contour through a critical
vertex $\beta$ at time $t^-$ ($t^+$) will simply be denoted
$\contour_{\beta}^-$ ($\contour_{\beta}^+$). If $\beta$ is a saddle
vertex then $\contour_{\beta}^-$ ($\contour_{\beta}^+$) consists of
two simple polygonal cycles $\cycle_{\alpha\beta}^-$
($\cycle_{\alpha\beta}^+$) and $\cycle_{\bua\beta}^-$
($\cycle_{\bua\beta}^+$), belonging to the equivalence classes of
contours corresponding to the edges $(\alpha,\beta)$ and $(\bua,
\beta)$ of $\contourtree$, respectively; $\beta$ is the only common
point of $\cycle_{\alpha\beta}^-$ ($\cycle_{\alpha\beta}^+$) and
$\cycle_{\bua\beta}⁻$ ($\cycle_{\bua\beta}^+$).

\subsection{Local events} \label{sec:local-events}
Suppose a local event occurs at time $t_0$ at which
$\heighttime{t_0}(v) = \heighttime{t_0}(u)$ where $u$ is a neighbor of
$v$. For simplicity, we assume that $\heighttime{t_0^-}(v) <
\heighttime{t_0^-}(u)$ and $\heighttime{t_0^+}(v) >
\heighttime{t_0^+}(u)$, i.e., the height of $v$ is being raised.  The
other case when $\heighttime{t_0^-}(v) > \heighttime{t_0^-}(u)$ is
symmetric, simply reverse the direction of time. In the following
sections we describe in detail the changes that occur to
$\contourtree$ during the three kinds of local events. We assume the
interval $[t_0^-,t_0^+]$ is sufficiently small so that there is no
vertex $w \neq u,v$ whose height lies between those of $u$ and $v$
during this interval.

\paragraph{Auxiliary event} An auxiliary event occurs at $t_0$ if $(v,u)$ is
an edge in either an ascent tree a descent tree.

First, suppose $(v,u)$ is an edge of an ascent tree at $t_0^-$. We remove the
edge $(v,u)$. If $v$ becomes a maximum vertex at $t_0^+$, then $v$ becomes the
root of an ascent tree; otherwise, we choose another vertex $w$ from
$\upperlink(v)$ and add the edge $(v,w)$. If $u$ was a maximum at $t_0^-$, i.e.,
$u$ is the root of an ascent tree, then we add the edge $(u,v)$.

Next, suppose $(v,u)$ is an edge in a descent tree at $t_0^-$. We delete the edge
$(v,u)$. If $u$ becomes a minimum vertex at $t_0^+$, $u$ becomes the root of a
descent tree; otherwise we choose a vertex $w$ from $\lowerlink(u)$ and add the
edge $(w,u)$. If $v$ was a minimum vertex at $t_0^-$, it is no longer a minimum
at $t_0^+$ and we add the edge $(u,v)$.

\paragraph{Shift event} A shift event occurs at $t_0$ if one of $u$ and
$v$, say $v$, was a critical vertex and other vertex, $u$, was a regular vertex
at $t_0^-$, and the critical vertex shifts from $v$ to $u$ at $t_0^+$. This
event does not cause any change in the topology of $\contourtree$ but the node
of $\contourtree$ that was labeled $v$ changes its label to $u$. If
$\heighttime{t_0^+}(v) > \heighttime{t_0^-}(v)$, i.e., $v$ is being raised
then at time $t_0^+$, $\retract(v)$ lies on an edge of $\contourtree$ whose
lower endpoint is $u$.

\paragraph{Birth/death event} 
A \emph{birth event} occurs at time $t_0$ if both $u$ and $v$ were
regular vertices at $t_0^-$, and they become critical vertices at
$t_0^+$. A \emph{death event} occurs when both $u$ and $v$ were critical
vertices at $t_0^-$ and become regular vertices at $t_0^+$. See
Figure~\ref{fig:birthdeath} for the change in the topology of $\contourtree$.
We now describe in detail how $\contourtree$ changes at a birth or a death
event.

\begin{figure} 
\centering
\includegraphics[width=0.25\textwidth, page=7]{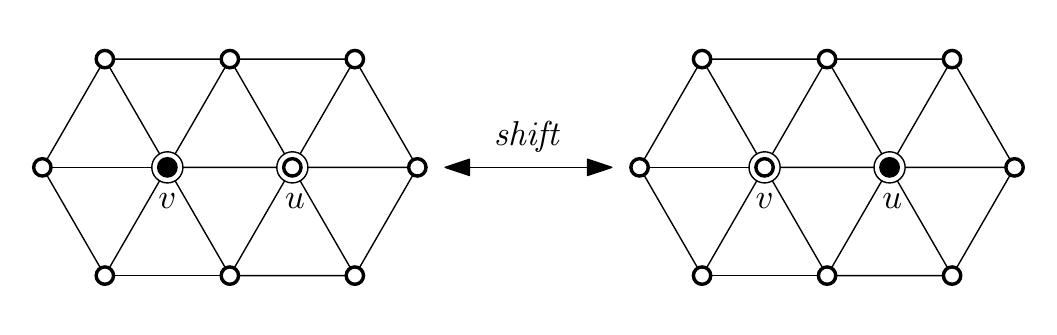}
\caption{Illustration of the change in the topology of the contour tree in birth and death events}
\label{fig:birthdeath}
\end{figure}

\textit{\textbf{Birth event:}}
If $v$ is being raised, then there are two possibilities: (i) $v$ becomes a
negative saddle and $u$ a minimum, or (ii) $v$ becomes a maximum and
$u$ a positive saddle. Suppose $\retract(u), \retract(v)$ lie on the
edge $(\alpha, \beta)$ of $\contourtree$. Then we split $(\alpha,
\beta)$ into two edges by adding a node corresponding to the new
saddle and creating a new edge incident on this node whose other
endpoint is a leaf. In case (i), $v$ is the node added on $(\alpha,
\beta)$ and $u$ is a new leaf, and in (ii) $u$ is the node on the edge
$(\alpha, \beta)$ and $v$ is the new leaf. 

\textit{\textbf{Death event:}}
Again, if $v$ is being raised, then there are two possibilities: (i) $v$ is a
minimum and $u$ a negative saddle at $t_0^-$, and (ii) $v$ is a positive saddle
and $u$ a maximum at $t_0^-$. In either case the edge $(u,v)$ disappears from
$\contourtree$ at $t_0^+$. Two edges incident on this degree 2 vertex merge
into a single edge.

\paragraph{Proof of correctness}
We now prove that whenever a local event occurs, then either the topology of
$\contourtree$ changes because of a birth/death event or the label of a node in
$\contourtree$ changes because of a shift event. There are three cases
depending on whether $u$ is an extremal, saddle or a regular vertex at time
$t_0^-$.
\\\\
(i) \emph{$v$ is an extremal vertex at time $t_0^-$.}  
Since $\heighttime{h_0^-}(v) < \heighttime{h_0^-}(u)$, $v$ cannot be a maximum
at time $t_0^-$, so assume that $v$ is a minimum at $t_0^-$. We observe that
$u$ also cannot be a maximum vertex because for any $w \in \link(v) \cup
\link(u)$, $\heighttime{t_0^-}(w) > \heighttime{t_0^-}(v)$, and therefore
$\heighttime{t_0^-}(w) > \heighttime{t_0^-}(u)$.  If $u$ is a \emph{regular
vertex} at time $t_0^-$, then $v$ is the only vertex in $\lowerlink(u)$ at
$t_0^-$. At time $t_0$ a shift event occurs that shifts the minimum from $v$ to
$u$. At $t_0^+$, $v$ is a regular vertex with $\lowerlink(v)=u$. 

If $u$ is a saddle vertex at time $t_0^-$, then at time $t_0$ a death event
occurs such that both $v$ and $u$ are regular vertices at time $t_0^+$. See
Figure~\ref{fig:local-event} (a) 
\\\\
(ii) \emph{$v$ is a regular vertex at time $t_0^-$.} 
If $u$ is the only vertex in $\upperlink(v)$, then $v$ becomes a maximum
vertex. If $u$ is a maximum at $t_0^-$, then a shift event occurs at $t_0$ and
$u$ becomes a regular vertex at $t_0^+$, and if $u$ is a regular vertex at
$t_0^-$ then a birth event occurs at $t_0$ and $u$ becomes a saddle vertex at
$t_0^+$; see Figure~\ref{fig:local-event} (b) (Note that $u$ cannot be a saddle
vertex at $t_0^-$ because otherwise $u$ becomes a multiple saddle at $t_0^+$.)

If $\upperlink(v)$ contains multiple vertices and $u$ is an endpoint of
$\upperlink(v)$ (degree of $u$ in $\upperlink(v)$ is 1), then $v$ simply
remains a regular vertex at $t_0^+$ and $u$ switches from $\upperlink(v)$ to
$\lowerlink(v)$. This does not cause $\contourtree$ to change. 

Finally, if $u$ is a middle vertex in $\upperlink(v)$ (degree of $u$ in
$\upperlink(v)$ is 2), $v$ becomes a saddle vertex at time $t_0^+$. If $u$ is
a regular vertex at time $t_0^-$, a birth event occurs at $t_0$ that creates a
minimum vertex at $u$ at $t_0^+$. If $u$ is a saddle vertex at time $t_0^-$,
then a shift event occurs at $t_0$ and becomes a regular vertex at $t_0^+$.
\\\\
(iii) \emph{$v$ is a saddle vertex at time}
$t_0^-$.  Note that since we assume that no multiple saddles can be created
during deformation, $u$ can not be the middle vertex of $\upperlink(v)$ at time
$t_0^-$.  If $u$ is the only vertex in $\upperlink(v)$, then $v$ becomes a
regular vertex. If $u$ is a regular vertex at $t_0^-$, then a shift event
occurs at $t_0$ and $u$ becomes a saddle vertex at $t_0^+$, and if $u$ is a
maximum vertex at $t_0^-$ then a death event occurs at $t_0$ and $u$ becomes a
regular vertex at $t_0^+$; see Figure~\ref{fig:local-event} (c).  (Note that $u$
cannot be a saddle vertex at $t_0^-$ because otherwise $u$  becomes a multiple
saddle at $t_0^+$.)

Finally, If $u$ is an endpoint of $\upperlink(v)$
neither $v$ nor $u$ changes vertex type. 

\begin{figure}
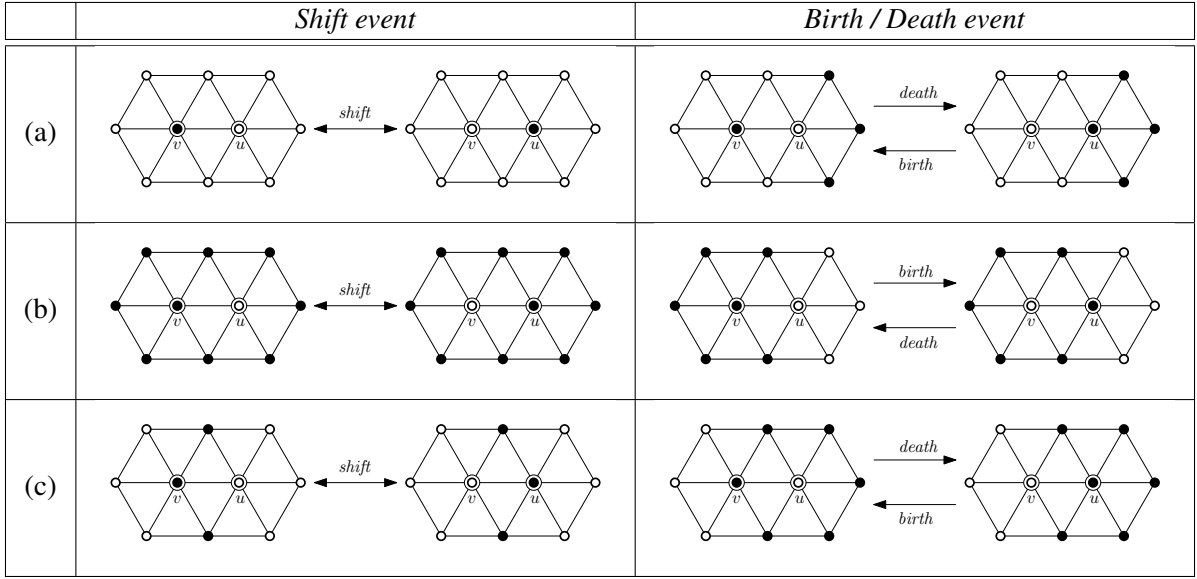
 
\centering
\begin{tabular}{|>{\centering\arraybackslash}m{0.5cm}|>{\centering\arraybackslash}m{7cm}|>{\centering\arraybackslash}m{7cm}|}
\hline
& \emph{Shift event} & \emph{Birth / Death event} \\
\hline
\hline
(a)
& \includegraphics[width=0.42\textwidth, page=1]{figures/local-event} 
& \includegraphics[width=0.42\textwidth, page=2]{figures/local-event} \\
\hline
(b)
& \includegraphics[width=0.42\textwidth, page=3]{figures/local-event} 
& \includegraphics[width=0.42\textwidth, page=4]{figures/local-event} \\
\hline
(c)
& \includegraphics[width=0.42\textwidth, page=5]{figures/local-event} 
& \includegraphics[width=0.42\textwidth, page=6]{figures/local-event} \\
\hline
\end{tabular}
\caption{Illustration of local events. (a) is when $v$ is a minimum
  vertex. (b) is when $v$ is a regular vertex. (c) is when $v$ is a
  saddle vertex. Hollow vertices has height higher than both $u$ and
  $v$ and similarly filled vertices has lower height. In all examples
  $v$ is raised i.e.  $\height(v) < \height(u)$ before the event and
  $\height(v) > \height(u)$ after the event.}
\label{fig:local-event}
\end{figure}

\subsection{Interchange events} \label{sec:interchange-events}
An interchange event occurs at time $t_0$ if there are two saddle
vertices $\alpha, \beta$ such that $\heighttime{t_0}(\alpha) =
\heighttime{t_0}(\beta)$ and both $\alpha$ and $\beta$ lie on the same
contour, i.e., $\retract(\alpha) = \retract(\beta)$ at time
$t_0$. There are four cases depending on whether $\alpha$ and $\beta$
are positive or negative saddles.  Suppose $\heighttime{t_0^-}(\alpha)
< \heighttime{t_0^-}(\beta)$.  Without loss of generality assume that
$\alpha$ is a negative saddle. The other case, when $\alpha$ is
positive can be reduced to this case by reversing the $z$-axis and/or
time axis; see below. Then there are two cases: (i) $\alpha$ is
negative and $\beta$ is positive , and (ii) both $\alpha$ and $\beta$
are negative saddles. We refer to them as \emph{mixed} and
\emph{negative} interchange events.  We describe these events and
their effect on $\contourtree$ in detail in the following sections,
but first we introduce some notation.

At time $t_0^-$ all contours in the equivalence class $(\alpha$,
$\beta)$ are combinatorially identical because no vertex of $\mesh$
retracts to the interior of the edge ($\alpha$, $\beta$) of
$\contourtree$. With a small abuse of notation, we will therefore
simply refer to all contours in ($\alpha$, $\beta$) as the contour
$\contour^-$ without specifying a certain level of the
contour. Similarly, at time $t_0^+$ all contours in ($\beta$,
$\alpha$) are combinatorially identical, we refer to these as
$\contour^+$. We label the vertices of $\contour^-$ ($\contour^+$)
that lie on edges incident to $\alpha$ (resp. $\beta$) with $\alpha$
(resp.  $\beta$). 

\subsubsection{Mixed interchange event}
We first consider the case when $\alpha$ is negative and $\beta$ is
positive. We assume that the edge $(\alpha, \beta)$ is blue at
$t_0^-$, so both $\alpha$ and $\beta$ are blue at $t_0^-$. The case
when $(\alpha, \beta)$ is red reduces to this case by reversing the
direction of the $z$-axis. Let $\ada$ and $\adb$ be the two down
neighbors of $\alpha$ and $\bua,\bub$ the up neighbors of $\beta$ at
time $t_0^-$.  Both $(\ada,\alpha)$ and $(\adb,\alpha)$ are blue
edges.  Assume without loss of generality that the edge $(\beta,
\bua)$ is blue and $(\beta, \bub)$ is red i.e. the up-contours
$\contour_{\beta\bub}^-$ of $\beta$ is red and the up-contour
$\contour_{\beta\bua}^-$ of $\beta$ is blue. Refer to Figure
\ref{fig:mixed-transitions}.

Since $\alpha$ is a negative saddle and $\beta$ is positive, the
vertices of $\contour^-$ labeled with $\alpha$ form two intervals that
are connected and non-adjacent in $\contour^-$. The same is true for
vertices labeled $\beta$ ($\contour^-$ intersects both components of
$\upperlink(\alpha)$ and $\lowerlink(\beta)$). Since $\beta$ is the
only vertex between $\contour^-$ and the up-contours
$\contour_{\beta\bua}^-$, $\contour_{\beta\bub}^-$ of $\beta$, all
vertices of $\contour^-$ are either interior to an edge in
$\seqedge(\contour_{\beta\bua}^-)$, labelled with $\beta$ or interior
to an edge in $\seqedge(\contour_{\beta\bub}^-)$. We mark the portion
of $\contour^-$ that intersects $\seqedge(\contour_{\beta\bub}^-)$ red
and the portion that intersects $\seqedge(\contour_{\beta\bua}^-)$
blue according to the color of the contours $\contour_{\beta\bub}^-$
and $\contour_{\beta\bua}^-$, respectively. Refer to Figure
\ref{fig:cycle_minus}. Similarly, let $\contour_{\ada\alpha}^-$ and
$\contour_{\adb\alpha}^-$ be the down-contours of $\alpha$, then all
vertices of $\contour^-$ are either interior to an edge in
$\seqedge(\contour_{\ada\alpha}^-)$, labelled with $\alpha$ or
interior to an edge in $\seqedge(\contour_{\adb\alpha}^-)$. There are
three types of mixed interchange events depending on the relative
positions of the vertices marked $\alpha$ and those marked $\beta$ in
$\contour^-$. See Figure~\ref{fig:cycle_minus}.

\begin{figure}
\centering
\begin{tabular}{ccc}
\includegraphics[width=0.31\textwidth,page=16]{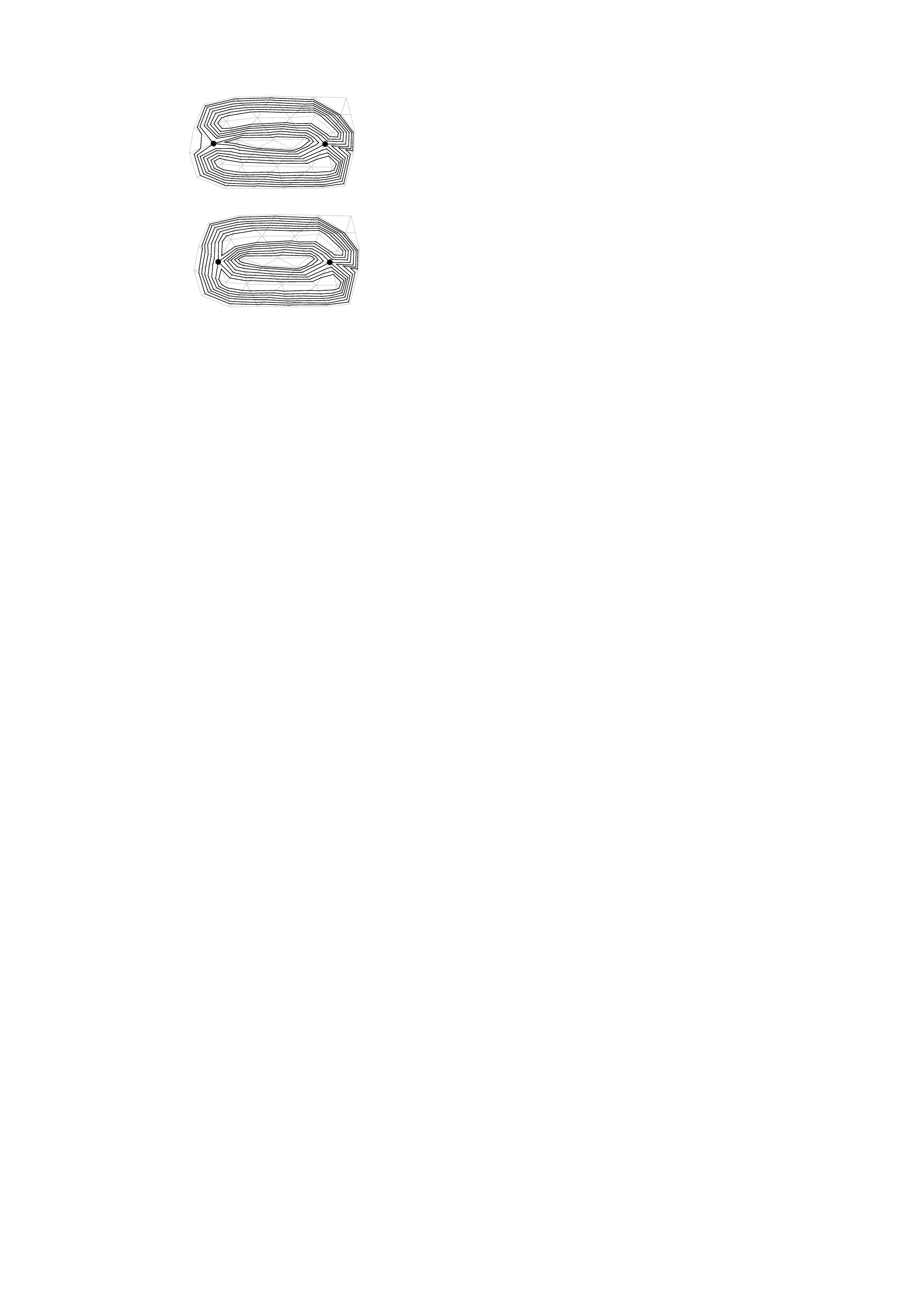} &
\includegraphics[width=0.31\textwidth,page=13]{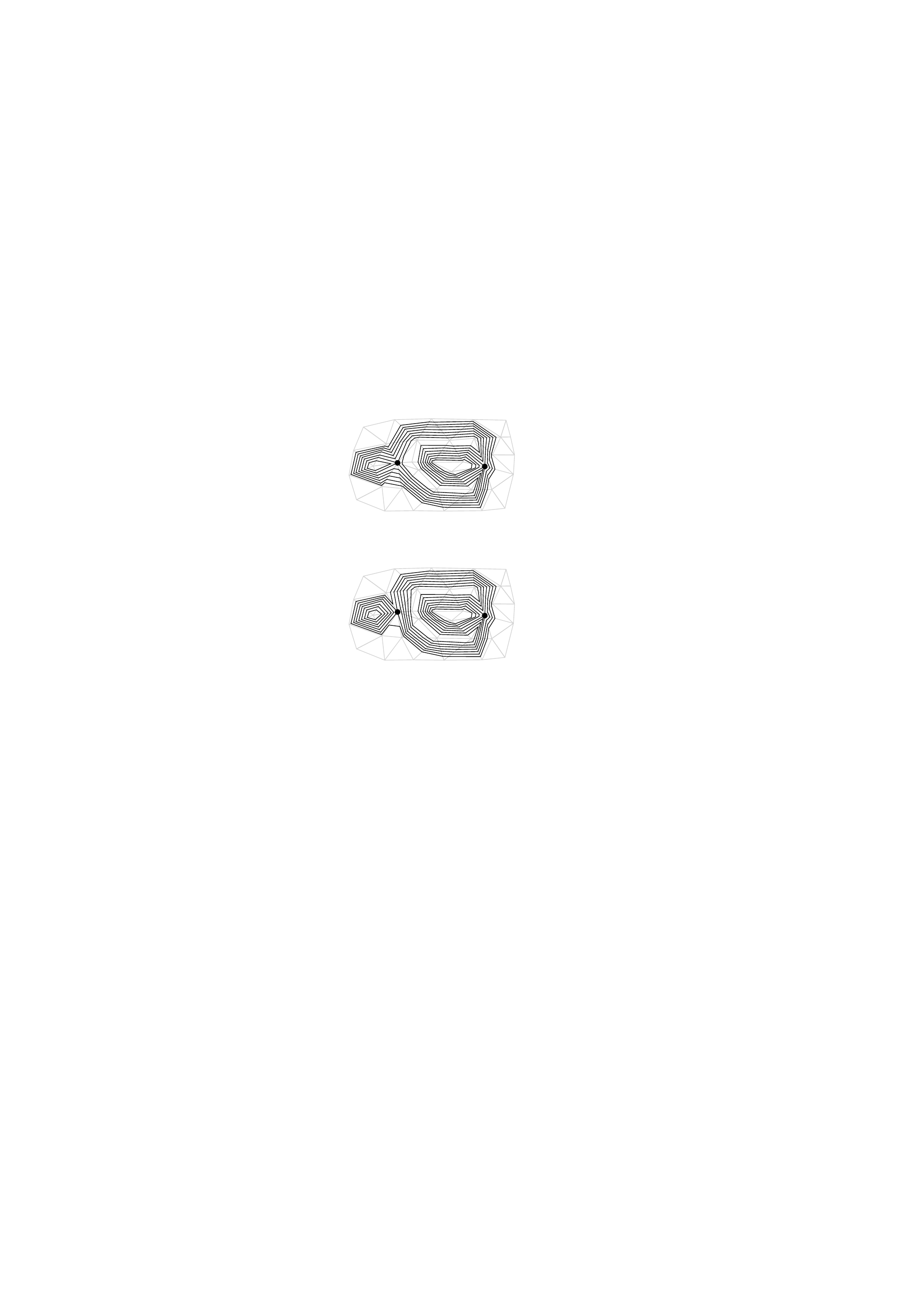} &
\includegraphics[width=0.31\textwidth,page=13]{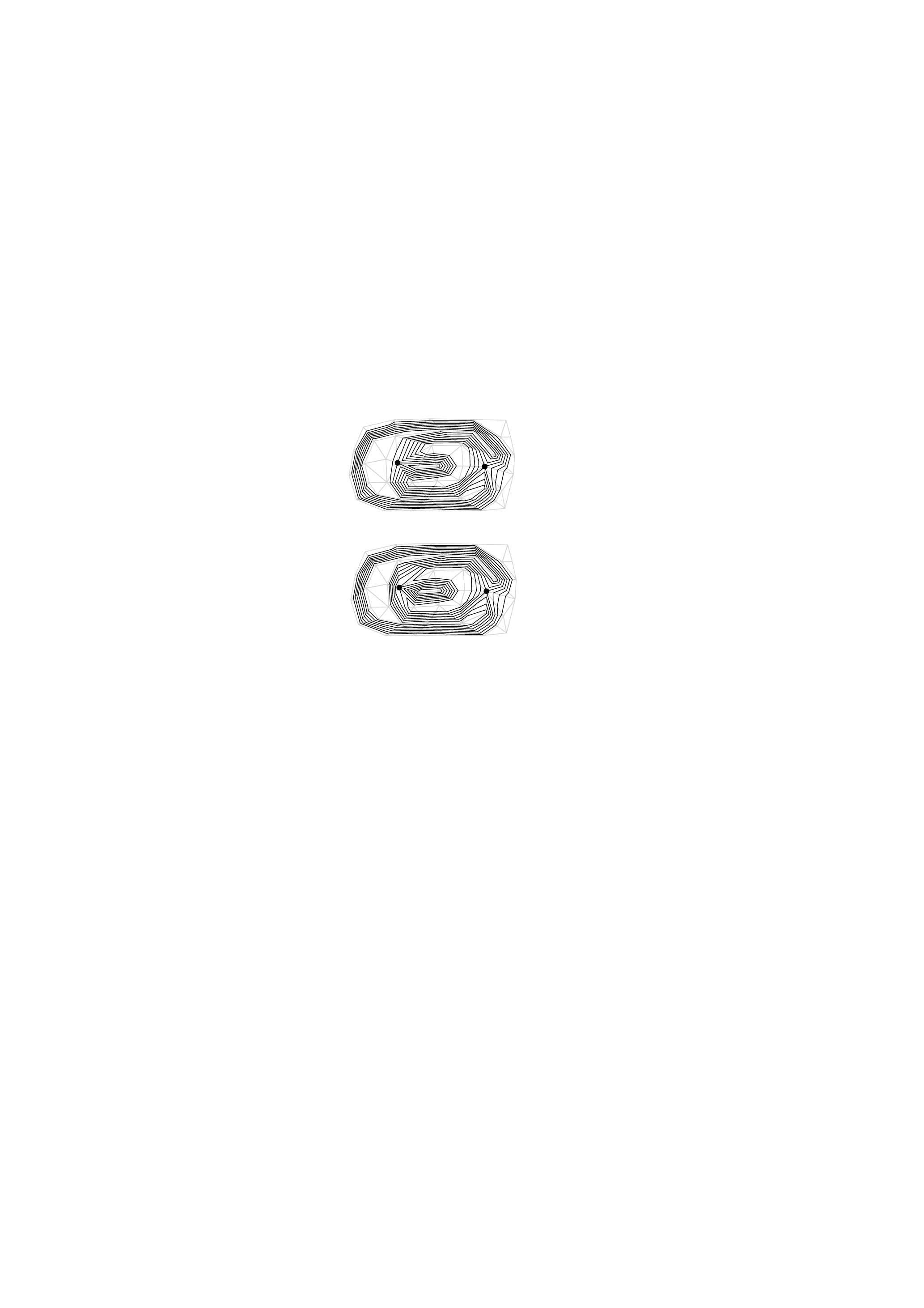} \\
(a) & (b) & (c)
\end{tabular}
\caption{Illustration of the possible colorings of $\contour^-$ for
  mixed interchange events. Showing the negative saddle $\alpha$ and
  possitive saddle $\beta$. Red (blue) contour vertices represent the
  vertices marked with $\alpha$ (resp. $\beta$). Hollow vertices of $\mesh$ has height higher than both
  $\alpha$ and $\beta$ and similarly filled vertices has lower height.
  (a) Contour vertices marked $\alpha$ and $\beta$ are interleaved
  along $\contour^-$.  (b) Contour vertices marked $\alpha$ lie in the
  blue part of $\contour^-$. (c) Contour vertices marked $\beta$ lie
  in the red part of $\contour^-$.}
\label{fig:cycle_minus}
\end{figure}

\setlist[description]{leftmargin=*}
\begin{description}
\item[(i)] A \emph{sign-interchange} when vertices marked $\alpha$ and
  those marked $\beta$ in $\contour^-$ are interleaved (vertices
  marked $\alpha$ intersect both $\seqedge(\contour_{\beta\bua}^-)$
  and $\seqedge(\contour_{\beta\bub}^-)$).
\item[(ii)] A \emph{blue} event when vertices marked $\alpha$ lie in
  the blue portion of $\contour^-$ (vertices marked $\alpha$ intersect
  $\seqedge(\contour_{\beta\bua}^-)$).
\item[(iii)] A \emph{red} event when vertices marked $\alpha$ lie in
  the red portion of $\contour^-$ (vertices marked $\alpha$ intersect
  $\seqedge(\contour_{\beta\bub}^-)$).
\end{description}

In case (ii) and (iii), without loss of generality assume that the
vertices labeled $\beta$ intersect
$\seqedge(\contour_{\adb\alpha}^-)$. The following lemma then
characterizes the change of $\contourtree$ at a mixed interchange
event.

\begin{lemma}\label{lemma:mixed}
Assuming that the edge $(\alpha,\beta)$ is blue at $t_0^-$ and $\alpha$ lying
below $\beta$ at $t_0^-$, the following change occurs in $\contourtree$ at
$t_0$:

\begin{description}
\item[(i)] At a sign-interchange event, the topology of $\contourtree$
  does not change. The only change is that the label $\alpha$ and
  $\beta$ of $\contourtree$ get swapped, so $\alpha$ becomes a
  positive (resp. negative) saddle at $t_0^+$. See
  Figure~\ref{fig:mixed-transitions} (a).
\item[(ii)] At a blue event, the signs of $\alpha$ and $\beta$ do not
  change, and the color of $(\alpha, \beta)$ remains blue. At time
  $t_0^+$, $\adb$ is the down neighbor of $\beta$, $\alpha$ and $\bub$
  are the up neighbors of $\beta$, $\ada$ and $\beta$ are down
  neighbors of $\alpha$, and $\bua$ is the up neighbor of
  $\alpha$. See Figure~\ref{fig:mixed-transitions} (b).
\item[(iii)] At a red event, the signs of $\alpha$ and $\beta$ do not
  change but the edge $(\alpha, \beta)$ becomes red and so does the
  saddle $\alpha$; $\beta$ remains a blue positive
  saddle. Furthermore, $\adb$ is down neighbor of $\beta$, $\alpha$
  and $\bua$ are the up neighbors of $\beta$, $\ada$ and $\beta$ are
  down neighbors of $\alpha$, and $\bub$ is the up neighbor of
  $\alpha$. See Figure~\ref{fig:mixed-transitions} (c).
\end{description}
\end{lemma}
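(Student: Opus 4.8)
The plan is to analyze each of the three cases by tracking the contour $\contour^-$ just before the event and $\contour^+$ just after, and using the quotient map $\retract$ to read off the combinatorial structure of $\contourtree$ near the images of $\alpha$ and $\beta$. The key observation I would set up first is a local picture of $\mesh$ in the closed slab $\Gamma = \{x : \heighttime{t_0^-}(\beta) - \epsilon \le \height(x) \le \heighttime{t_0^-}(\beta) + \epsilon\}$, which at time $t_0^-$ contains the two vertices $\alpha$ (lower) and $\beta$ (higher) and no other vertex (by the assumption that the interval $[t_0^-, t_0^+]$ is small and no other vertex has height between them). Since $\alpha$ is a negative saddle and $\beta$ a positive saddle, inside this slab $\alpha$ has two down-contours $\contour_{\ada\alpha}^-$, $\contour_{\adb\alpha}^-$ merging into $\contour^-$, and $\contour^-$ then splits at $\beta$ into $\contour_{\beta\bua}^-$ (blue) and $\contour_{\beta\bub}^-$ (red). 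The coloring of $\contour^-$ induced by the partition of its edges between $\seqedge(\contour_{\beta\bua}^-)$ and $\seqedge(\contour_{\beta\bub}^-)$ (with the $\beta$-labeled vertices on the boundary) is exactly the data the three cases are defined by; symmetrically its partition according to $\seqedge(\contour_{\ada\alpha}^-)$ vs.\ $\seqedge(\contour_{\adb\alpha}^-)$ places the $\alpha$-labeled vertices. Swapping the heights of $\alpha$ and $\beta$ (i.e.\ passing through $t_0$) reverses the roles: in $\Gamma$ at $t_0^+$ the slab is cut so that $\beta$ (now lower) splits a down-contour $\contour^+$ and $\alpha$ (now higher) merges two contours into $\contour^+$.

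For case (i), the sign-interchange, I would argue that when the $\alpha$-labeled and $\beta$-labeled intervals interleave along $\contour^-$, each down-contour $\contour_{\ada\alpha}^-$, $\contour_{\adb\alpha}^-$ of $\alpha$ meets both components of $\lowerlink(\beta)$, so each contributes to both up-contours of $\beta$; equivalently the four "corner" contours below $\alpha$ and above $\beta$ are cyclically linked. The cleanest way to finish is to observe that the combinatorial link structure of the union $\retract(\alpha)\cup\retract(\beta)$ in $\contourtree$ — a path with four pendant edges to $\ada,\adb,\bua,\bub$ and an internal edge — is symmetric under exchanging which internal vertex is the positive one and which is the negative one, because at $t_0^+$ the interleaving persists (it is a cyclic/topological property of $\contour^\pm$, which are combinatorially determined by $\Gamma$ and are essentially the same cycle read through different saddles). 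So no edge appears or disappears; only the labels and hence the signs swap. I would draw the conclusion from Figure~\ref{fig:mixed-transitions}(a) rather than a heavy formal argument.

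For cases (ii) and (iii), the argument is by direct surgery on $\contour^-$. In the blue event the $\alpha$-labeled vertices all lie in $\seqedge(\contour_{\beta\bua}^-)$, i.e.\ the negative saddle $\alpha$ sits "inside" the blue up-contour of $\beta$; after passing through $t_0$, $\beta$ becomes the lower saddle and $\contour^+$ is its down-contour, while $\alpha$ (now higher, still a negative saddle since both its down-pieces survive) merges $\contour_{\ada\alpha}^-$ and the cycle through $\beta$ into $\contour^+$ — reading off $\retract$ gives exactly: $\adb$ below $\beta$; $\alpha$ and $\bub$ above $\beta$; $\ada$ and $\beta$ below $\alpha$; $\bua$ above $\alpha$, and all colors stay blue because the interiors on the relevant side stay in the sublevel set. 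For the red event one repeats this with $\contour_{\beta\bub}^-$ (red) in place of $\contour_{\beta\bua}^-$: now $\alpha$ lies inside the red up-contour of $\beta$, so after the swap the edge $(\alpha,\beta)$ inherits the red side, making $\alpha$ a red negative saddle while $\beta$ stays a blue positive saddle, and the up neighbor of $\alpha$ becomes $\bub$ rather than $\bua$. The sign assignments ($\alpha$ negative, $\beta$ positive at $t_0^+$) follow by counting up/down-contours from the local picture in $\Gamma$.

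The main obstacle, I expect, is verifying rigorously that the contour $\contour^+$ at $t_0^+$ really is combinatorially the "reconnection" of the $t_0^-$ pieces dictated by the coloring — i.e.\ that raising $\beta$ past $\alpha$ changes the level-set topology inside $\Gamma$ exactly as claimed and not in some other way. This is where the assumption $\retract(\alpha)=\retract(\beta)$ at $t_0$ (they lie on the same contour) and the simplicity of both saddles are essential: they guarantee that the level set inside $\Gamma$ at height $\heighttime{t_0}(\alpha)$ is a single wedge of cycles meeting at the two saddles, so the before/after pictures are forced, and the case distinction (i)–(iii) is genuinely exhaustive. I would handle this by a careful but elementary analysis of $\mesh$ restricted to $\star(\alpha)\cup\star(\beta)$ together with the fixed combinatorics of $\contour^-$ outside these stars, deferring the pictures to Figure~\ref{fig:mixed-transitions} and Figure~\ref{fig:cycle_minus} for the reader.
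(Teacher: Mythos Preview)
Your overall plan coincides with the paper's: both arguments track how the down-contours of $\alpha$ and up-contours of $\beta$ at levels just outside the interval $[\heighttime{t_0^-}(\alpha),\heighttime{t_0^-}(\beta)]$ persist combinatorially across $t_0$, and then re-sweep the height at $t_0^+$ to read off the new neighbors. The paper makes this step explicit via a deformation map $\contourdeform$ that records the observation $\seqedge(\contour_{\ada\alpha}^-)=\seqedge(\contourdeform(\contour_{\ada\alpha}^-))$, etc.; your phrase ``fixed combinatorics of $\contour^-$ outside these stars'' is the same idea but left implicit. Making it explicit is exactly what resolves the ``main obstacle'' you flag at the end.

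There is, however, a concrete error in your case~(i). You write that ``each down-contour $\contour_{\ada\alpha}^-$, $\contour_{\adb\alpha}^-$ of $\alpha$ meets both components of $\lowerlink(\beta)$.'' This is false: because the two $\alpha$-intervals and two $\beta$-intervals alternate cyclically on $\contour^-$, splitting $\contour^-$ at $\alpha$ yields two cycles each carrying exactly \emph{one} $\beta$-interval, i.e.\ each down-contour of $\alpha$ meets \emph{one} component of $\lowerlink(\beta)$. That is precisely what forces the sign change: at $t_0^+$, raising the level from $\theta_{down}$, both cycles $\contourdeform(\contour_{\ada\alpha}^-)$ and $\contourdeform(\contour_{\adb\alpha}^-)$ converge to $\beta$ (since each carries a $\beta$-interval), so $\beta$ is now a \emph{negative} saddle; continuing upward, the resulting merged contour carries both $\alpha$-intervals and splits at $\alpha$, making $\alpha$ \emph{positive}. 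Your ``symmetry of the pendant structure'' sentence does not by itself establish which vertex gets which sign---it is compatible with no swap at all---so you need the convergence argument (as the paper gives) rather than an appeal to symmetry.

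A smaller point: in your case~(ii) the phrase ``$\contour^+$ is its down-contour'' and ``merges \ldots\ into $\contour^+$'' has the direction reversed; $\contour^+$ is the equivalence class on the edge $(\beta,\alpha)$ at $t_0^+$, hence an \emph{up}-contour of $\beta$ and a \emph{down}-contour of $\alpha$. This does not affect the structure of the argument, but fix the wording before writing it up.
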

\begin{proof}
We prove the case (i) in detail and sketch the proof for the other two cases,
as the argument is similar to case (i).

Consider the sign-interchange event. Let $\contour_{\ada\alpha}^-$,
$\contour_{\adb\alpha}^-$ and $\contour_{\beta\bua}^-$,
$\contour_{\beta\bub}^-$ be the the down-contours of $\alpha$ and the
up-contours of $\beta$ at time $t_0^-$, respectively. Let
$\theta_{down}=\heighttime{t_0^-}(\alpha)-\epsilon$ and
$\theta_{up}=\heighttime{t_0^-}(\beta)+\epsilon$ be the level of the
down-contours and up-contours, respectively. Assume that $\theta_{up}
> \heighttime{t_0^+}(\alpha)$.

First we fix the time $t_0^-$ and consider the contour $\contour^-$ as
we decrease the level towards $\heighttime{t_0^-}(\alpha)$. The
vertices of $\contour^-$ marked $\alpha$ converge to $\alpha$ of
$\terrain$, and we obtain $\contour_{\alpha}^-$. Since the vertices
marked $\alpha$ and $\beta$ in $\contour^-$ are interleaved, each
$\cycle_{\ada\alpha}^-$ and $\cycle_{\adb\alpha}^-$ contains an
interval where the vertices are marked $\beta$ and therefore so does
$\contour_{\ada\alpha}^-$ and $\contour_{\adb\alpha}^-$. Similarly,
$\contour_{\beta\bua}^-$ and $\contour_{\beta\bub}^-$ each contain an
interval where the vertices are marked $\alpha$. Refer to
Figure~\ref{fig:sign-interchange-proof}(a).

Next we fix the height $\theta_{down}$ and $\theta_{up}$ and move
forward in time. As time moves towards $t_0^+$,
$\contour_{\ada\alpha}^-$, $\contour_{\adb\alpha}^-$,
$\contour_{\beta\bua}^-$ and $\contour_{\beta\bub}^-$ continuously
deforms but no topological changes occur to the contours. Let
$\contourdeform_{\ada\alpha}: \mathbb{R}^2\times[t_0^-;t_0^+]
\rightarrow \mathbb{R}^2$ represent the continuous deformation of
$\contour_{\ada\alpha}^-$. For brevity of description we will simply
refer to $\contourdeform_{\ada\alpha}(\contour_{\ada\alpha}^-,t_0^+)$
as $\contourdeform(\contour_{\ada\alpha}^-)$. Similarly for
$\contourdeform(\contour_{\adb\alpha}^-)$,
$\contourdeform(\contour_{\beta\bua}^-)$ and
$\contourdeform(\contour_{\beta\bub}^-)$. Note that
$\seqedge(\contour_{\ada\alpha}^-) =
\seqedge(\contourdeform(\contour_{\ada\alpha}^-))$,
$\seqedge(\contour_{\adb\alpha}^-) =
\seqedge(\contourdeform(\contour_{\adb\alpha}^-))$,
$\seqedge(\contour_{\beta\bua}^-) =
\seqedge(\contourdeform(\contour_{\beta\bua}^-))$ and
$\seqedge(\contour_{\beta\bub}^-) =
\seqedge(\contourdeform(\contour_{\beta\bub}^-))$. Refer to
Figure~\ref{fig:sign-interchange-proof}(b).

Now we fix the time to $t_0^+$ and increase the height from
$\theta_{down}$ and monitor how the contours
$\contourdeform(\contour_{\ada\alpha}^-)$ and
$\contourdeform(\contour_{\adb\alpha}^-)$ deform. Recall that
$\heighttime{t_0^+}(\beta) < \heighttime{t_0^+}(\alpha)$, so as we
increase the height, we first encounter $\beta$ and the vertices of
$\contourdeform(\contour_{\ada\alpha}^-),
\contourdeform(\contour_{\adb\alpha}^-)$ marked $\beta$ converge to
the vertex $\beta$ of $\mesh$. Hence, $\beta$ is now a negative
saddle. Since both $\contourdeform(\contour_{\ada\alpha}^-),
\contourdeform(\contour_{\adb\alpha}^-)$ are blue, so is the saddle
$\beta$ at $t_0^+$. The up-contour $\contour^+$ of $\beta$ at time
$t_0^+$ is therefore a blue contour such that $\seqedge(\contour^+)$
contains all edges of
$\seqedge(\contourdeform(\contour_{\ada\alpha}^-)) \cup
\seqedge(\contourdeform(\contour_{\adb\alpha}^-))$ that are not
incident on $\beta$. If we continue to increase the height, the
vertices of $\contour^+$ marked $\alpha$ converge to $\alpha$ as we
reach $\heighttime{t_0^+}(\alpha)$, and it splits into two contours at
$\alpha$. Since $\contour^+$ was a blue contour, $\alpha$ is now a
blue positive saddle. The up-contours of $\alpha$ at time $t_0^+$ will
be $\contourdeform(\contour_{\beta\bua}^-)$ and
$\contourdeform(\contour_{\beta\bub}^-)$, so $\bua$ and $\bub$ will be
up-neighbors of $\alpha$. Refer to
Figure~\ref{fig:sign-interchange-proof}(b).This completes the proof of
the first part of the lemma.

Next if the vertices marked $\alpha$ and $\beta$ are not interleaved
in $\contour^-$, then by our assumption the vertices marked $\beta$
only intersect
$\seqedge(\contourdeform(\contour_{\adb\alpha}^-))$. Hence, as we
increase the height from $\theta_{down}$ to
$\heighttime{t_0^+}(\beta)$, the vertices marked $\beta$ in
$\contourdeform(\contour_{\adb\alpha}^-)$ converge to the vertex
$\beta$ of $\mesh$ and $\contourdeform(\contour_{\adb\alpha}^-)$
splits into two contours at $\beta$.  Note that as we increase the
height $\contourdeform(\contour_{\ada\alpha}^-)$ also deforms but does
not meet $\beta$, as it has no vertices marked $\beta$. Hence $\beta$
is a blue positive saddle, with $\adb$ as the down neighbor of
$\beta$. Let $\contour_\beta^+$ be the contour passing through $\beta$
at time $t_0^+$. 

If the blue event occurs, then vertices marked $\alpha$ are contained
in edges of $\seqedge(\contourdeform(\contour_{\beta\bua}^-))$. We can
divide $\contour_\beta^+$ into two cycles $\cycle_{\beta\bub}^+$,
$\cycle_{\beta\alpha}^+$ of vertices intersecting
$\seqedge(\contourdeform(\contour_{\beta\bub}^-))$ and
$\seqedge(\contourdeform(\contour_{\beta\bua}^-))$, respectively. As
we increase the height to $\heighttime{t_0^+}(\alpha)$, the vertices
marked $\alpha$ in $\contour^+$ and
$\contourdeform(\contour_{\ada\alpha}^-)$ converge to $\alpha$, and
the two contours merge into a single contour $\contour_\alpha^+$ at
$\alpha$. Since the color of both $\contourdeform(\contour^+)$ and
$\contourdeform(\contour_{\ada\alpha}^-)$ is blue, $\alpha$ remains a
blue negative saddle. The up-contour of $\alpha$ at time $t_0^+$ is
$\contourdeform(\contour_{\beta\bua}^-)$ and the contour
$\contourdeform(\contour_{\beta\bub}^-)$ remains an up-contour of
$\beta$. This proves part (ii) of the lemma.

The proof for the third case is symmetric and omitted from here.
\end{proof}

The above lemma characterizes the changes in the contour tree at a mixed
interchange event under the assumption that $\alpha$ was a blue negative saddle
at $t_0^-$. As mentioned above, the other cases can be reduced to the above
case. In particular, if at time $t_0^-$, $\alpha$ is a red negative saddle,
reverse the direction of the $z$-axis; if $\alpha$ is a blue positive saddle,
reverse the direction of time; and if $\alpha$ is red positive saddle, then
reverse the direction of time as well as that of the $z$-axis.

\begin{figure}
\centering
\begin{tabular}{ |>{\centering\arraybackslash}m{.4cm}|| >{\centering\arraybackslash}m{2.0cm}| >{\centering\arraybackslash}m{3.1cm}| >{\centering\arraybackslash}m{3.1cm}| >{\centering\arraybackslash}m{3.1cm} | >{\centering\arraybackslash}m{2.0cm}| }
\hline
& $\contourtree$ & \multicolumn{3}{c|}{Contours} & $\contourtree$ \\
\cline{2-6}
& $t_0^-$ & $t_0^-$ & $t_0$ & $t_0^+$ & $t_0^+$ \\
\hline 
(a) & 
    & \includegraphics[width=0.20\textwidth, page=1]{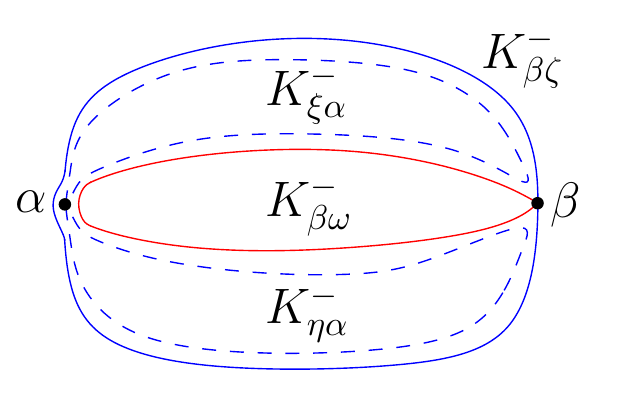} 
    & \includegraphics[width=0.20\textwidth, page=2]{figures/mixed-interchange-transitiona} 
    & \includegraphics[width=0.20\textwidth, page=3]{figures/mixed-interchange-transitiona} 
    & \includegraphics[width=0.10\textwidth, page=5]{figures/mixed-interchange-transitiona} \\
\cline{1-1} \cline{3-6} 
(b) & \includegraphics[width=0.10\textwidth, page=4]{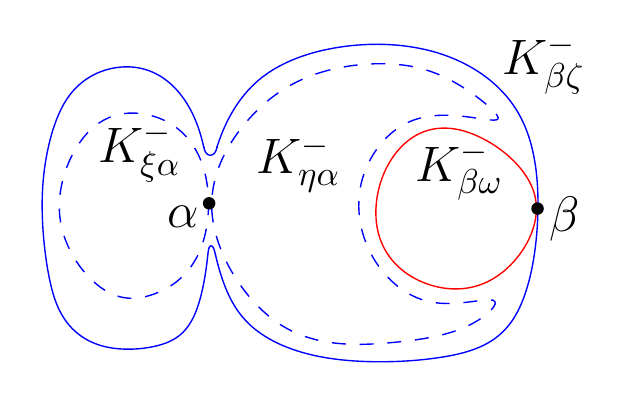} 
    & \includegraphics[width=0.20\textwidth, page=1]{figures/mixed-interchange-transitionb} 
    & \includegraphics[width=0.20\textwidth, page=2]{figures/mixed-interchange-transitionb} 
    & \includegraphics[width=0.20\textwidth, page=3]{figures/mixed-interchange-transitionb} 
    & \includegraphics[width=0.10\textwidth, page=5]{figures/mixed-interchange-transitionb} \\
\cline{1-1} \cline{3-6} 
(c) & 
    & \includegraphics[width=0.20\textwidth, page=1]{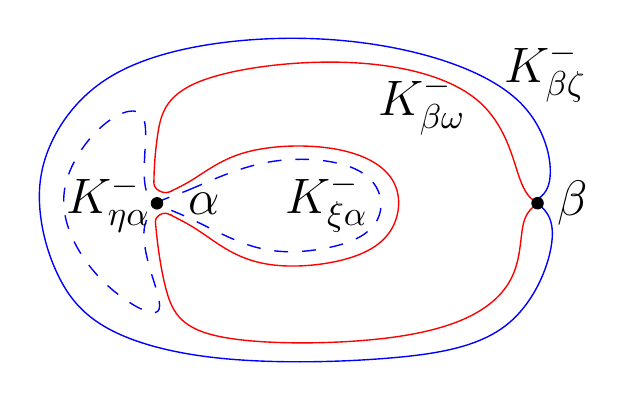} 
    & \includegraphics[width=0.20\textwidth, page=2]{figures/mixed-interchange-transitionc} 
    & \includegraphics[width=0.20\textwidth, page=3]{figures/mixed-interchange-transitionc} 
    & \includegraphics[width=0.10\textwidth, page=5]{figures/mixed-interchange-transitionc} \\
\hline 

\end{tabular}
\caption{Illustration of topological changes in $\levelset$ and
  contour tree transitions during mixed interchange events.  Dashed
  contour lines are contours at the height of $\alpha$ and solid
  contour lines are contours at the height of $\beta$. (a) Illustrates
  a sign change event. (b) Illustrates a blue event. (c) Illustrates a
  red event.}
\label{fig:mixed-transitions}
\end{figure}

\begin{figure} 
\centering
\begin{tabular}{cc}
\begin{tabular}{|>{\centering\arraybackslash}m{2cm}|>{\centering\arraybackslash}m{5cm}|}
\hline
Height & Contours \\
\hline
$\heighttime{t_0^-}(\beta)+\varepsilon$ &
\includegraphics[width=.3\textwidth, page=5]{figures/sign-interchange-proof} \\
\hline
$\heighttime{t_0^-}(\beta)$ &
\includegraphics[width=.3\textwidth, page=6]{figures/sign-interchange-proof} \\
\hline
$ < \heighttime{t_0^-}(\beta)$ $ > \heighttime{t_0^-}(\alpha)$ &
\includegraphics[width=.3\textwidth, page=7]{figures/sign-interchange-proof} \\
\hline
$\heighttime{t_0^-}(\alpha)$ &
\includegraphics[width=.3\textwidth, page=8]{figures/sign-interchange-proof} \\
\hline
$\heighttime{t_0^-}(\alpha)+\varepsilon$ &
\includegraphics[width=.3\textwidth, page=9]{figures/sign-interchange-proof} \\
\hline
\end{tabular}
& 
\begin{tabular}{|>{\centering\arraybackslash}m{2cm}|>{\centering\arraybackslash}m{5cm}|}
\hline
Height & Contours \\
\hline
$\heighttime{t_0^-}(\beta)+\varepsilon$ &
\includegraphics[width=.3\textwidth, page=10]{figures/sign-interchange-proof} \\
\hline
$\heighttime{t_0^+}(\alpha)$ &
\includegraphics[width=.3\textwidth, page=11]{figures/sign-interchange-proof} \\
\hline
$ < \heighttime{t_0^+}(\alpha)$ $ > \heighttime{t_0^+}(\beta)$ &
\includegraphics[width=.3\textwidth, page=12]{figures/sign-interchange-proof} \\
\hline
$\heighttime{t_0^+}(\beta)$ $(\heighttime{t_0^-}(\beta))$ &
\includegraphics[width=.3\textwidth, page=13]{figures/sign-interchange-proof} \\
\hline
$\heighttime{t_0^-}(\alpha)+\varepsilon$ &
\includegraphics[width=.3\textwidth, page=14]{figures/sign-interchange-proof} \\
\hline
\end{tabular} \\
(a) $t = t_0^-$ & (b) $t = t_0^+$
\end{tabular} 
\caption{Illustration of sign change event. The vertices marked $\alpha$ are
illustrated as red vertices and the vertices marked $\beta$ as blue vertices.}
\label{fig:sign-interchange-proof}
\end{figure}

\begin{figure} 
\centering
\begin{tabular}{cc}
\begin{tabular}{|>{\centering\arraybackslash}m{2cm}|>{\centering\arraybackslash}m{5cm}|}
\hline
Height & Contours \\
\hline
$\heighttime{t_0^-}(\beta)+\varepsilon$ &
\includegraphics[width=.3\textwidth, page=10]{figures/blue-proof} \\
\hline
$\heighttime{t_0^-}(\beta)$ &
\includegraphics[width=.3\textwidth, page=6]{figures/blue-proof} \\
\hline
$ < \heighttime{t_0^-}(\beta)$ $ > \heighttime{t_0^-}(\alpha)$ &
\includegraphics[width=.3\textwidth, page=5]{figures/blue-proof} \\
\hline
$\heighttime{t_0^-}(\alpha)$ &
\includegraphics[width=.3\textwidth, page=4]{figures/blue-proof} \\
\hline
$\heighttime{t_0^-}(\alpha)+\varepsilon$ &
\includegraphics[width=.3\textwidth, page=2]{figures/blue-proof} \\
\hline
\end{tabular}
&
\begin{tabular}{|>{\centering\arraybackslash}m{2cm}|>{\centering\arraybackslash}m{5cm}|}
\hline
Height & Contours \\
\hline
$\heighttime{t_0^-}(\beta)+\varepsilon$ &
\includegraphics[width=.3\textwidth, page=11]{figures/blue-proof} \\
\hline
$\heighttime{t_0^+}(\alpha)$ &
\includegraphics[width=.3\textwidth, page=9]{figures/blue-proof} \\
\hline
$ < \heighttime{t_0^+}(\alpha)$ $ > \heighttime{t_0^+}(\beta)$ &
\includegraphics[width=.3\textwidth, page=8]{figures/blue-proof} \\
\hline
$\heighttime{t_0^+}(\beta)$ $(\heighttime{t_0^-}(\beta))$ &
\includegraphics[width=.3\textwidth, page=7]{figures/blue-proof} \\
\hline
$\heighttime{t_0^-}(\alpha)+\varepsilon$ &
\includegraphics[width=.3\textwidth, page=3]{figures/blue-proof} \\
\hline
\end{tabular} \\
(a) $t = t_0^-$ & (b) $t = t_0^+$
\end{tabular} 

\caption{Illustration of blue event. The vertices marked $\alpha$ are
illustrated as red vertices and the vertices marked $\beta$ as blue vertices.}
\label{fig:blue-proof}
\end{figure}

\subsubsection{Negative interchange event} \label{sec:neg-interchange}
Let $\ada, \adb$ be the two down neighbors of $\alpha$ at $t_0^-$, and let
$\bua$ be the other down neighbor of $\beta$ ($\alpha$ is a down neighbor
of $\beta$ at $t_0^-$). See Figure~\ref{fig:transitions}. The
change in topology of $\contourtree$ at a negative interchange event is similar
to performing a rotation at node $\beta$. That is, $\alpha$ becomes the upper
endpoint and $\beta$ the lower endpoint of the edge $(\alpha, \beta)$ at time
$t_0^+$, and one of the down subtrees of $\alpha$ (rooted in $\ada$ and $\adb$)
becomes a down subtree of $\beta$. Next we describe the change of
$\contourtree$ in more detail, and also argue why our analysis is correct.

Let $\contour_{\ada\alpha}^-$ and $\contour_{\adb\alpha}^-$ be the
down-contours of $\alpha$. Since both $\alpha$ and $\beta$ are
negative sadles, the vertices of $\contour^-$ labeled with $\alpha$
form two connected components, and the vertices labeled with $\beta$
form a single connected component ($\contour^-$ intersects both
components of $\upperlink(\alpha)$ but only a single component of
$\lowerlink(\beta)$). Since $\alpha$ is the only vertex between
$\contour^-$ and the down-contours of $\alpha$, all vertices of
$\contour^-$ are either interior to an edge in
$\seqedge(\contour_{\ada\alpha}^-)$, labelled with $\alpha$ or
interior to an edge in $\seqedge(\contour_{\adb\alpha}^-)$. Since we
assume the absence of multiple saddles, the vertices of $\contour^-$
labelled with $\beta$ can not intersect both
$\seqedge(\contour_{\ada\alpha}^-)$ and
$\seqedge(\contour_{\adb\alpha}^-)$.

\begin{lemma}\label{lemma:negative-interchange}
If the vertices of $\contour^-$ marked $\beta$ are interior to
$\seqedge(\contour_{\adb\alpha}^-)$, then at time $t_0^+$, $\adb$
becomes a down neighbor of $\beta$, $\bua$ the other down neighbor of
$\beta$, and $\ada$ and $\beta$ becomes the down neighbor of $\alpha$.
\end{lemma}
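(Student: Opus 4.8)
The plan is to mimic the argument used for the mixed interchange event (Lemma~\ref{lemma:mixed}), namely to track the relevant contours through a three-stage homotopy: first vary the height at the fixed time $t_0^-$, then fix two auxiliary heights and advance time from $t_0^-$ to $t_0^+$, and finally vary the height again at the fixed time $t_0^+$. Concretely, set $\theta_{down} = \heighttime{t_0^-}(\alpha) - \epsilon$ (a level just below both saddles, living in the down-contours of $\alpha$) and $\theta_{up} = \heighttime{t_0^-}(\beta) + \epsilon$ (a level just above $\beta$, i.e.\ on the up-contour corresponding to the edge $(\beta,\bua_\beta)$ above $\beta$). At time $t_0^-$, decreasing the level from $\heighttime{t_0^-}(\alpha)$ forces the vertices of $\contour^-$ marked $\alpha$ to converge to the vertex $\alpha$, recovering $\contour_\alpha^-$ and hence the two down-contours $\contour_{\ada\alpha}^-$ and $\contour_{\adb\alpha}^-$; by hypothesis the vertices of $\contour^-$ marked $\beta$ all sit interior to edges of $\seqedge(\contour_{\adb\alpha}^-)$, so $\contour_{\ada\alpha}^-$ has no vertex marked $\beta$.

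Next, fixing the two levels $\theta_{down}$ and $\theta_{up}$ and moving time forward to $t_0^+$, the contours at these levels merely deform without topological change (no vertex of $\mesh$ has height in the relevant sliver of the interval $[t_0^-,t_0^+]$), so their edge sequences $\seqedge(\cdot)$ are preserved; write $\contourdeform(\contour_{\ada\alpha}^-)$, $\contourdeform(\contour_{\adb\alpha}^-)$ for the images, and similarly for the down-contour $\contour_{\bua\beta}^-$ of $\beta$ other than $\contour^-$. Now fix $t = t_0^+$ and \emph{increase} the level starting from $\theta_{down}$. Since $\heighttime{t_0^+}(\beta) < \heighttime{t_0^+}(\alpha)$, we encounter $\beta$ first: only $\contourdeform(\contour_{\adb\alpha}^-)$ carries vertices marked $\beta$, so those vertices converge to $\beta$ while $\contourdeform(\contour_{\ada\alpha}^-)$ deforms but does not touch $\beta$. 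Hence at $t_0^+$ the vertex $\beta$ is a negative saddle whose two down-contours are (the relevant deformations of) $\contour_{\adb\alpha}^-$ and $\contour_{\bua\beta}^-$ — that is, $\adb$ and $\bua$ become the down neighbors of $\beta$, and its up-contour is the merged cycle $\contour^+$ through $\beta$. Continuing to raise the level to $\heighttime{t_0^+}(\alpha)$, the vertices marked $\alpha$ in $\contour^+$ and in $\contourdeform(\contour_{\ada\alpha}^-)$ converge to $\alpha$, where these two cycles merge; thus $\alpha$ is a negative saddle at $t_0^+$ with down neighbors $\beta$ and $\ada$, which is exactly the claimed configuration.

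The routine part is the bookkeeping of edge sequences and the verification that no critical height is crossed during the time sweep (which is guaranteed by the assumptions in Section~\ref{sec:interchange-events} that $[t_0^-,t_0^+]$ is short and no multiple saddle is created). The main obstacle, as in Lemma~\ref{lemma:mixed}, is establishing the \emph{separation} claim: that the vertices of $\contour^-$ marked $\beta$ cannot meet both $\seqedge(\contour_{\ada\alpha}^-)$ and $\seqedge(\contour_{\adb\alpha}^-)$. This is precisely where the no-multiple-saddle hypothesis is essential — if the $\beta$-marked arc straddled both down-contours of $\alpha$, then pulling the level down through $\heighttime{t_0^-}(\alpha)$ and then, at $t_0^+$, pulling it up through $\heighttime{t_0^+}(\beta)$ would make $\beta$ a saddle with three down-contours, i.e.\ a multiple saddle, contradicting our standing assumption. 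Once that separation is in hand, the homotopy argument above is a direct transcription of the sign-interchange case, and the orientation/neighbor assignments follow by reading off which deformed down-contours attach to $\beta$ versus $\alpha$.
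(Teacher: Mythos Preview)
Your proposal is correct and follows essentially the same homotopy argument as the paper's proof: fix a level $\theta$ just below $\alpha$, track the three contours $\contour_{\ada\alpha}^-$, $\contour_{\adb\alpha}^-$, $\contour_{\bua\beta}^-$ forward in time to $t_0^+$, then raise the level through $\heighttime{t_0^+}(\beta)$ and $\heighttime{t_0^+}(\alpha)$ to read off the new down-neighbor structure. The only minor discrepancies are that your auxiliary level $\theta_{up}$ is never actually used here (the paper works with a single level $\theta = \theta_{down}$), and the separation claim you flag as the ``main obstacle'' is in the paper established \emph{before} the lemma statement as part of the setup rather than inside the proof.
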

\begin{proof}
Let $\theta = \heighttime{t_0^-}(\alpha)-\epsilon$ be the level of
down-contours $\contour_{\ada\alpha}^-$ and $\contour_{\adb\alpha}^-$,
and let $\contour_{\bua\beta}^-$ be the contour at level $\theta$
along $(\bua, \beta)$. We fix the height at $\theta$ and go forward in
time. As time moves towards $t_0^+$, $\contour_{\ada\alpha}^-$,
$\contour_{\adb\alpha}^-$ and $\contour_{\bua\beta}^-$ continuously
deforms, but no topological changes occur to these contours. Let
$\contourdeform_{\ada\alpha}: \mathbb{R}^2\times[t_0^-;t_0^+]
\rightarrow \mathbb{R}^2$ represent the continuous deformation of
$\contour_{\ada\alpha}^-$. For brevity of description we will simply
refer to $\contourdeform_{\ada\alpha}(\contour_{\ada\alpha}^-,t_0^+)$
as $\contourdeform(\contour_{\ada\alpha}^-)$. Similarly for
$\contourdeform(\contour_{\adb\alpha}^-)$ and
$\contourdeform(\contour_{\bua\beta}^-)$. Assume that the vertices of
$\contour^-$ marked $\beta$ are interior to
$\seqedge(\contour_{\adb\alpha}^-)$ then $\contour_{\adb\alpha}^-$ and
$\contour_{\bua\beta}^-$ have vertices marked $\beta$; and
$\contour_{\ada\alpha}^-$, and $\contour_{\adb\alpha}^-$ have vertices
marked $\alpha$. Note that $\seqedge(\contour_{\ada\alpha}^-) =
\seqedge(\contourdeform(\contour_{\ada\alpha}^-))$,
$\seqedge(\contour_{\adb\alpha}^-) =
\seqedge(\contourdeform(\contour_{\adb\alpha}^-))$ and
$\seqedge(\contour_{\bua\beta}^-) =
\seqedge(\contourdeform(\contour_{\bua\beta}^-))$.

We now fix time at $t_0^+$. Recall that $\theta <
\heighttime{t_0^+}(\beta) < \heighttime{t_0^+}(\alpha)$. As we
increase the height from $\theta$ to $\heighttime{t_0^+}(\beta)$, the
vertices marked $\beta$ in $\contourdeform(\contour_{\adb\alpha}^-)$
and $\contourdeform(\contour_{\bua\beta}^-)$ converge to $\beta$, and
thus $\contourdeform(\contour_{\adb\alpha}^-)$ and
$\contourdeform(\contour_{\bua\beta}^-)$ merge into a single contour
$\contour^+$ at $\beta$.  Thus $\beta$ is a negative saddle and has
$\adb$ and $\bua$ as its down neighbors.

Next, we increase the height from $\heighttime{t_0^+}(\beta)$ to
$\heighttime{t_0^+}(\alpha)$.  The merged contour $\contour^+$,
contains vertices marked $\alpha$ and so does the contour
$\contourdeform(\contour_{\ada\alpha}^-)$. Hence as the height reaches
$\heighttime{t_0^+}(\alpha)$, the vertices marked $\alpha$ in
$\contourdeform(\contour_{\ada\alpha}^-)$ and $\contour^+$ converge
to $\alpha$.  Hence at time $t_0^+$, $\alpha$ is a negative saddle
with $\beta$ and $\ada$ as its two down neighbors. This completes the
proof of the lemma.
\end{proof}

Next, we analyze the change in
colors of $\alpha$ and $\beta$ at the above event. There are three
cases depending on the colors of $\alpha$ and $\beta$ at time
$t_0^-$.

First, assume that $\beta$ is blue at $t_0^-$, then $\alpha$ is also
blue at $t_0^-$, and so are the cycles $\cycle_{\alpha\beta}^-,
\cycle_{\bua\beta}^-, \cycle_{\ada\alpha}^-,
\cycle_{\adb\beta}^-$. See Figure~\ref{fig:transitions} (a). Moreover,
$\cycle_{\alpha\beta}^-, \cycle_{\bua\beta}^-$ lie in the exterior of
each other and so do $\cycle_{\ada\alpha}^-$ and
$\cycle_{\adb\alpha}^-$.  Consequently $\contour_\ada^-,
\contour_\adb^-, \contour_\bua^-$ lie in the exterior of each other,
and the same holds for $\contour_\ada^+, \contour_\adb^+$, and
$\contour_\bua^+$. We can now conclude that $\alpha$ and $\beta$
remain blue negative saddles at time $t_0^+$.

Next, assume that $\beta$ is red at $t_0^-$. Then one of $(\alpha,
\beta)$ or $(\bua, \beta)$ is red and the other is blue. We assume
that $(\alpha, \beta)$ is red; we will argue below that the other case
can be reduced to this case. In this case $\cycle_{\bua\beta}^-$ lies
in the interior of the cycle $\cycle_{\alpha\beta}^-$; see
Figure~\ref{fig:transitions} (b), (c). If $\alpha$ is red, then one of
$(\ada,\alpha)$ and $(\adb, \alpha)$ is red and the other is
blue. First assume that $(\adb, \alpha)$ is red at $t_0^-$. So
$\cycle_{\ada\alpha}^-$ lies in the interior of
$\cycle_{\adb\alpha}^-$: see Figure~\ref{fig:transitions} (b). Hence
$\contour_{\adb}^-$ contains both $\contour_\ada^-$ and
$\contour_\bua^-$ in its interior. Therefore at time $t_0$, the
contour $\cycle_{\alpha=\beta}$ passing through $\alpha$ and $\beta$
consists of three cycles $\cycle_{\ada\alpha}, \cycle_{\adb\alpha}$,
and $\cycle_{\bua\beta}$ --- $\cycle_{\adb\alpha}$ containing both
$\cycle_{\ada\alpha}$ and $\cycle_{\bua\beta}$ at $\alpha$ and
$\beta$, respectively. At time $t_0^+$, both $\alpha$ and $\beta$
remain red. See Figure~\ref{fig:transitions} (b).

Finally, consider the case when $(\ada, \alpha)$ is red and $(\adb, \alpha)$ is
blue. In this case, $\cycle_{\ada\alpha}^-$ contains $\cycle_{\adb\alpha}^-$ in
its interior. At time $t_0$, $\contour_{\alpha=\beta}$ consists of three cycles
$\cycle_{\ada\alpha}, \cycle_{\adb\alpha}$, and $\cycle_{\bua\beta}$ ---
$\cycle_{\ada\alpha}$ containing both $\cycle_{\adb\alpha}$ and
$\cycle_{\bua\beta}$ in its interior with $\cycle_{\ada\alpha}$ touching
$\cycle_{\adb\alpha}$ at $\alpha$ and $\cycle_{\adb\alpha}$ touching
$\cycle_{\bua\beta}$ at $\beta$. At time $t_0^+$, $\alpha$ remains red but
$\beta$ becomes blue. See Figure~\ref{fig:transitions} (c).

We conclude this case by noticing that the case when $\alpha$ is blue
can be reduced to the third case above by reversing the direction of
time. Refer to Figure~\ref{fig:transitions} (c).

\begin{figure}
\centering
\begin{tabular}{ |>{\centering\arraybackslash}m{.4cm}|| >{\centering\arraybackslash}m{2.0cm}| >{\centering\arraybackslash}m{3.1cm}| >{\centering\arraybackslash}m{3.1cm}| >{\centering\arraybackslash}m{3.1cm} | >{\centering\arraybackslash}m{2.0cm}| }
\hline
& $\contourtree$ & \multicolumn{3}{c|}{Contours} & $\contourtree$ \\
\cline{2-6}
& $t_0^-$ & $t_0^-$ & $t_0$ & $t_0^+$ & $t_0^+$ \\
\hline 
(a) & \includegraphics[width=0.10\textwidth, page=4]{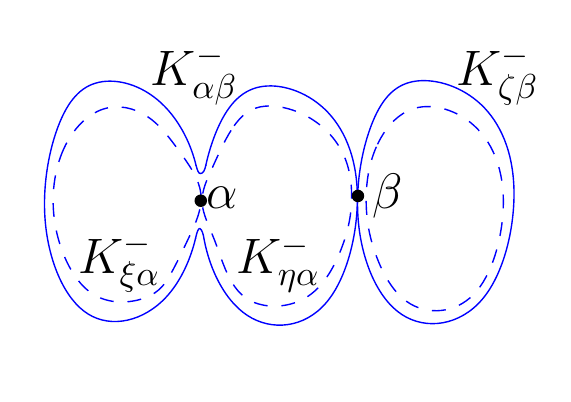} 
    & \includegraphics[width=0.20\textwidth, page=1]{figures/negative-interchange-transitiona} 
    & \includegraphics[width=0.20\textwidth, page=2]{figures/negative-interchange-transitiona}
    & \includegraphics[width=0.20\textwidth, page=3]{figures/negative-interchange-transitiona}
    & \includegraphics[width=0.10\textwidth, page=5]{figures/negative-interchange-transitiona} \\
\hline
(b) & \includegraphics[width=0.10\textwidth, page=4]{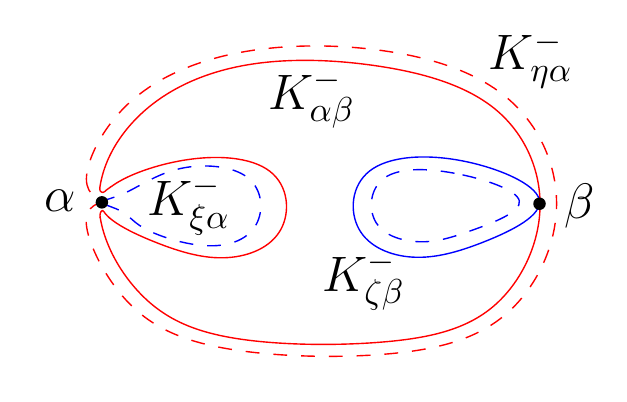} 
    & \includegraphics[width=0.20\textwidth, page=1]{figures/negative-interchange-transitionb} 
    & \includegraphics[width=0.20\textwidth, page=2]{figures/negative-interchange-transitionb} 
    & \includegraphics[width=0.20\textwidth, page=3]{figures/negative-interchange-transitionb}
    & \includegraphics[width=0.10\textwidth, page=5]{figures/negative-interchange-transitionb} \\
\hline 
(c) & \includegraphics[width=0.10\textwidth, page=4]{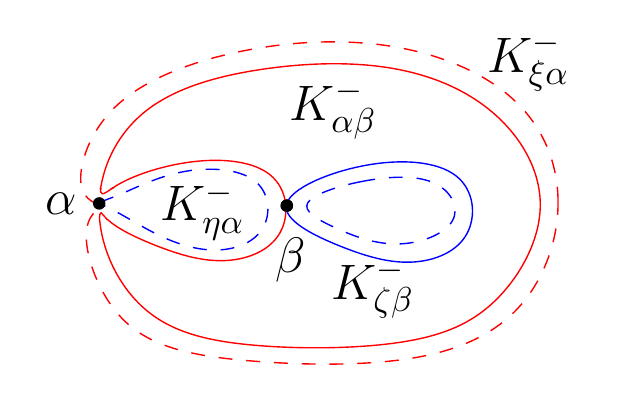} 
    & \includegraphics[width=0.20\textwidth, page=1]{figures/negative-interchange-transitionc} 
    & \includegraphics[width=0.20\textwidth, page=2]{figures/negative-interchange-transitionc}
    & \includegraphics[width=0.20\textwidth, page=3]{figures/negative-interchange-transitionc}
    & \includegraphics[width=0.10\textwidth, page=5]{figures/negative-interchange-transitionc} \\
\hline 

\end{tabular}
\caption{Illustration of topological changes in $\levelset$ and
  contour tree transitions during negative interchange events.  Dashed
  lines are contours at the height of $\alpha$ and solid lines are
  contours at the height of $\beta$. (a) is when $\cycle_{\beta}$ is
  blue. (b) and (c) are when $\cycle_{\beta}$ is red.}
\label{fig:transitions}
\end{figure}

\section{Event Handling} \label{sec:event-handling} 
In this section we describe how to efficiently maintain $\contourtree$
of $\terrain$ under the events described in the previous section. We
first describe the data structure used to represent the contour,
ascent and descent trees of $\terrain$. Second, we describe the
mechanisms that we use to repair this structure for each
event. Finally, we will describe how each event is detected during a
$\changeheight(v,r)$ operation.

\subsection{Data Structure}
Our data structure represents both the contour, ascent and descent
trees of $\terrain$ as \emph{link-cut trees}~\cite{st-dsdt-83} that
allow for efficiently maintaining a dynamic forrest $F$. Specifically,
for $v$ and $w$ in $F$ the $\textsc{link}(v,w)$ operation connects the
trees containing $v$ and $w$ by inserting the edge $(v,w)$, the
$\textsc{cut}(v,w)$ operation splits the tree containing $v$ and $w$
by removing the edge $(v,w)$, the $\textsc{evert}(v)$ operation makes
$v$ the root of the tree containing $v$, and the $\textsc{root}(v)$
operation queries for the root of the tree containing $v$. All
operations require $\OhOf(\log(n))$ time where $n$ is the size of the
forrest. Furthermore, if every edge of $F$ is associated with a cost,
we can evaluate functions on root-to-leaf paths finding e.g. the
minimum cost along the path in $\OhOf(\log(n))$ time. If the costs
along a root-to-leaf path in $F$ are decreasing we can also search for
edges with a given cost along the path in $\OhOf(\log(n))$
time.

We augment the vertices of $\terrain$ with pointers to their
corresponding instances in both the ascent and descent tree
forest. Also, we augment each critical vertex in $\terrain$ with a
pointer to its corresponding vertex in $\contourtree$. Finally, for
each vertex $v$ in $\terrain$ we conceptually order the vertices in
$\link(v)$ clockwise around $v$ and maintain a \emph{link pointer} to
the start and end vertex of each connected component of
$\lowerlink(v)$ and $\upperlink(v)$.

Our datastructure links the root $x$ of the descent tree
$\descendtree(x)$ and the root $y$ of the ascent tree $\ascendtree(y)$
to their corresponding leafs in $\contourtree$. Conceptually we have
the ascent and descent tree regions of $\terrain$ hanging of the leafs
of $\contourtree$. Refer to Figure~\ref{fig:subdivisions}. Consider
the operation $\textsc{FindEdge(v)}$ that given a vertex $v$ in
$\terrain$ returns the edge $(\alpha,\beta)$ of $\contourtree$
containing $\retract(v)$. Our structure easily supports this operation
in $\OhOf(\log(n))$ time. Assume that $v$ belongs to $\descendtree(x)$
and $\ascendtree(y)$ we compute $(\alpha,\beta)$ by retrieving leafs
$x$ and $y$ of $\contourtree$, rooting $\contourtree$ in $y$ and
finally searching for $(\alpha,\beta)$ on the root-to-leaf path from
$y$ to $x$.

\begin{figure}
  \centering{
    \begin{tabular}{ccc}
      \includegraphics[width=0.35\textwidth,page=2]{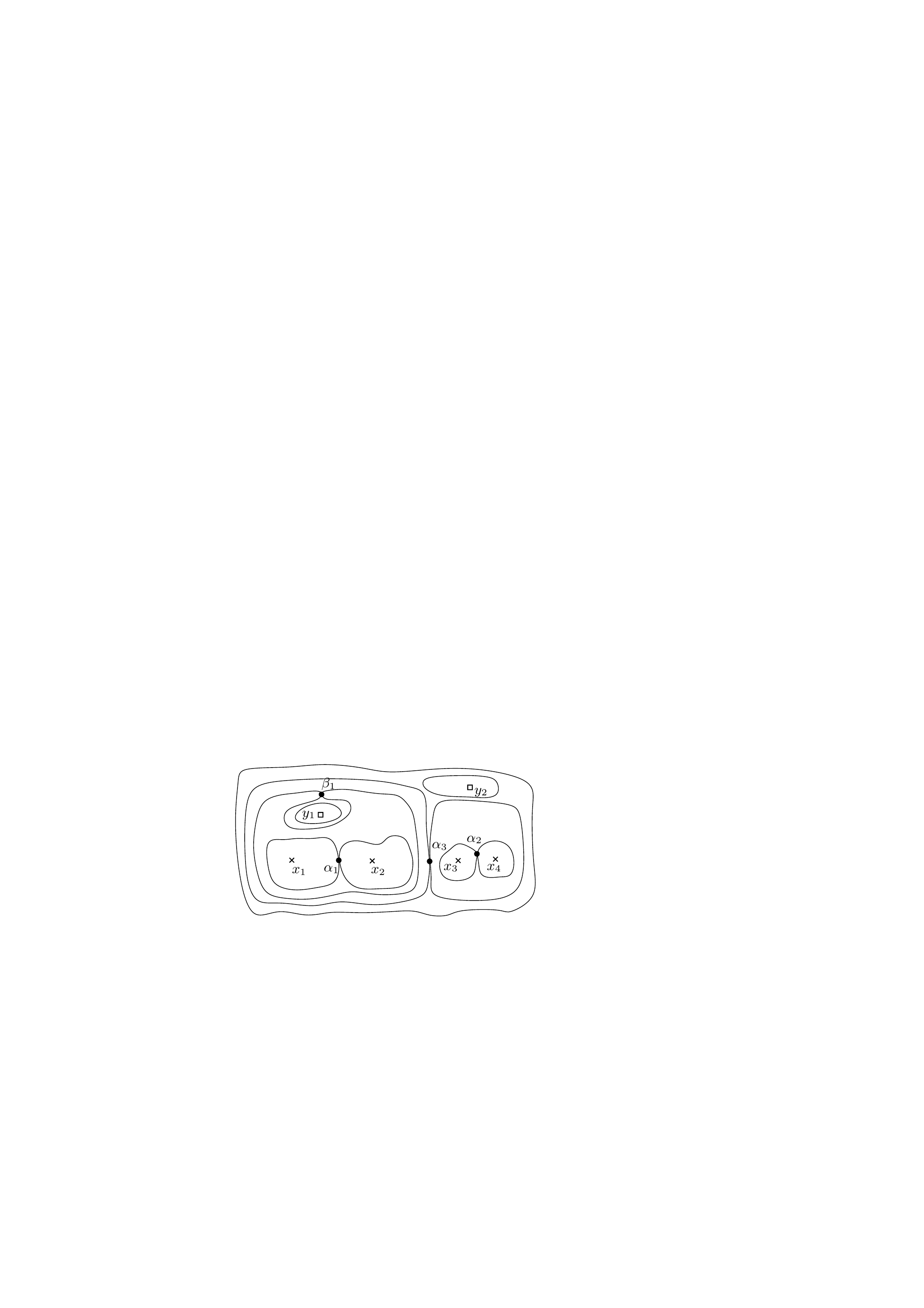} &
      \includegraphics[width=0.35\textwidth,page=3]{figures/terrain_min_max.pdf} &
      \includegraphics[width=0.2\textwidth,page=4]{figures/terrain_min_max.pdf} \\
      (a) & (b) & (c)
    \end{tabular}
  }
  \caption{(a) Illustrates the subdivision of the terrain represented by
    the descend tree forest. (b) Illustrates the subdivision of the
    terrain represented by the ascend tree forest. (c) Our data
    structure consists of the contour tree connected with descent trees
    at minimum vertices and with ascend trees at maximum vertices.}
  \label{fig:subdivisions}
\end{figure}

\subsection{Repair Mechanisms}
Based on the description in Section~\ref{sec:deformation} of events
and the changes they cause in the contour tree and ascent/descent
trees of $\terrain$, we will now go through each event type and show
how to efficiently repair our data structure. 

\paragraph{Auxiliary event}
Consider an auxiliary event that occurs on edge $(v,u)$ in descent
tree $\descendtree(x)$, we need to cut the subtree rooted in $u$ from
$\descendtree(x)$ and potentially link the subtree to some other
descent tree containing vertex $w$ in $\lowerlink(u)$. Since the
descent tree is represented as a dynamic tree and since $v$ points to
its instance in $\descendtree(x)$, this is easily done using
$\OhOf(\log(n))$ time. This is similar for auxiliary events that occur
on ascent trees. 

During an auxiliary event we might also need to update link pointers
in $v$ and $u$. The links change as described in the correctness proof
in Section \ref{sec:local-events} and since there are only a constant
number of link pointers, these can straightforwardly be updated in
$\OhOf(1)$ time using the clockwise ordering of $\link(v)$ and
$\link(u)$.

\paragraph{Shift event}
Consider a shift event that shifts a critical vertex between $v$ and
$u$. In this case we simply update our data structure by switching the
labels of $v$ and $u$ in $\contourtree$ and the ascent/descent trees
containing $u$ and $v$. This can be done in $\OhOf(1)$ time since each
vertex of $\terrain$ points to its instance in ascent/descent trees of
$\terrain$ and every critical vertex of $\terrain$ points to its
corresponding vertex in $\contourtree$ of $\terrain$.

\paragraph{Birth/death event}
Consider a birth event where both $u$ and $v$ were regular vertices
before the event and become critical vertices after the event. Before
the event both $\retract(u)$ and $\retract(v)$ lie on the same edge
$(\alpha,\beta)$ of $\contourtree$. The event splits $(\alpha,\beta)$
into two by adding a new saddle node to $\contourtree$ and creating an
edge incident on this node whose other endpoint is a leaf. Recall
Figure~\ref{fig:birthdeath}. Performing these changes on
$\contourtree$ is trivial once we know $(\alpha,\beta)$ and we can
simply use $\textsc{FindEdge(v)}$ to retrieve $(\alpha,\beta)$ in
$\OhOf(\log(n))$ time. Handling death events is trivial since $u$ and
$v$ are critical vertices with pointers to $\contourtree$, thus it
simply corresponds to removing a leaf edge of $\contourtree$.

\paragraph{Mixed interchange event}
Consider a mixed interchange event where a negative saddle vertex
$\alpha$ is raised above a positive saddle $\beta$ as described in
Section \ref{sec:interchange-events}. From Lemma~\ref{lemma:mixed}, we
get a complete description of the changes that occur to $\contourtree$
for each type of mixed interchange event, making these changes to our
representation of $\contourtree$ is trivial, so what remains is to
detect which kind of event that takes place at time $t_0$. Recall that
at time $t_0^-$ the vertices along $\contour^-$ marked with $\beta$
lie in two disjoint intervals. Removing these vertices from
$\contour^-$ divides the contour into two chains; One consisting of
vertices interior to edges in $\seqedge(\contour_{\beta\bua}^-)$ and
one consisting of vertices interior to edges in
$\seqedge(\contour_{\beta\bub}^-)$. From Section
\ref{sec:interchange-events} we get that a sign-interchange event
occurs at time $t_0$ if vertices of $\contour^-$ marked $\alpha$
belong both to edges in $\seqedge(\contour_{\beta\bua}^-)$ and
$\seqedge(\contour_{\beta\bub}^-)$; A blue event occurs at time $t_0$
if vertices of $\contour^-$ marked $\alpha$ belong to
$\seqedge(\contour_{\beta\bua}^-)$; A red event occurs at time $t_0$
if vertices of $\contour^-$ marked $\alpha$ belong to
$\seqedge(\contour_{\beta\bub}^-)$.

Note that an interval of $\contour^-$ marked $\alpha$ intersects edges
of $\terrain$ that are incident to vertices in the same connected
component of $\upperlink(\alpha)$. Therefore to determine whether this
interval belong to $\seqedge(\contour_{\beta\bua}^-)$ or
$\seqedge(\contour_{\beta\bub}^-)$, we can select a vertex $v_i$ in
the corresponding component of $\upperlink(\alpha)$, and determine
whether the edge $(\alpha,v_i)$ belong to
$\seqedge(\contour_{\beta\bua}^-)$ or
$\seqedge(\contour_{\beta\bub}^-)$. The edge $(\alpha,v_i)$ belongs to
$\seqedge(\contour_{\beta\bua}^-)$ if and only if there exists an
ascending path from $v_i$ that ends in a maximum vertex of $\terrain$
corresponding to a leaf node in the up subtree $\contourtree_{\bua}$
of $\contourtree$ rooted in $\bua$. Let $y_i$ be the maximum vertex of
$\terrain$ such that $v_i$ belongs to $\ascendtree(y_i)$, we check
that $y_i$ belongs to $\contourtree_{\bua}$, by simply determining
whether the path from $\alpha$ to $y_i$ in $\contourtree$ goes through
$\bua$.  Given $v_i$ our
algorithm retrieves $y_i$ as the root of the dynamic tree
$\ascendtree(y_i)$ in $\OhOf(\log(n))$ time. We then root
$\contourtree$ in $\alpha$ and test whether the first vertex on the
root-to-leaf path from $\alpha$ to $y_i$ is $\bua$. This is
straightforward to in $\OhOf(\log(n))$ time using standard dynamic
tree operations.

\paragraph{Negative Interchange Event}
Consider a negative interchange event where a negative saddle vertex
$\alpha$ is raised above a negative saddle $\beta$. As shown in
Section \ref{sec:interchange-events} there is only one kind of
topological change that can occur to $\contourtree$ during this event
and this change corresponds to performing a rotation at node
$\beta$. According to Lemma~\ref{lemma:negative-interchange} we can
determine which down subtree of $\alpha$ becomes the down subtree of
$\beta$ after the rotation, by determining whether vertices along
$\contour^-$ marked $\beta$ are within
$\seqedge(\contour_{\alpha\ada}^-)$ or
$\seqedge(\contour_{\alpha\adb}^-)$. We omit the details of this
algorithm since it is similar to the algorithm for handling mixed
interchange events except that we use the descent trees of $\terrain$
instead of the ascent trees.

\subsection{Event Detection}
Let $X$ contain the set of vertices in $\upperlink(v)$ as well as the
upper node of every edge $e$ in $\contourtree$ whose lower node is $v$
if $v$ is a saddle vertex. From the event description in
Section~\ref{sec:deformation}, we know that as we raise $v$ the next
event happens as $h(v)$ becomes equal to $h(u)$ for some $u$ in
$X$. To detect the next event, we maintain a priority queue of
vertices in $X$ with their height as priority, and simply retrieve the
minimum priority event. If $v$ and $u$ are saddle vertices then an
interchange event occurs otherwise a local event occurs. As described
in Section \ref{sec:local-events}, we distinguish between shift, birth
and death events by looking at how $u$ and $v$ change from regular to
critical vertex or vice versa during the event. Both before and after
an event, we can determine whether $u$ and $v$ are critical or regular
in $\OhOf(1)$ time using their link pointers. As also described in
Section \ref{sec:local-events}, an auxiliary event can be detected by
simply examining whether $(v,u)$ is in either the ascent tree pointed
to by $v$ or the descent tree pointed to by $u$.

If an interchange event occurs the endpoints of the edges containing
$v$ changes which requires the priority queue to be updated
accordingly. Similarly, if a birth or death event occurs the edges of
$\contourtree$ change locally around $v$ and $u$, and the priority
queue needs to be updated accordingly.

\section{Extensions and Applications}
In this section we describe how our data structure can be extended to
support a range of other triangulation update operations besides
$\changeheight(v,r)$. We also describe how the data structure can be
extended to maintain $\contourtree$ as the entire mesh varies over
time. Finally, we show how the data structure can be used to maintain
persistence pairs of $\height$ as it varies over time.

\subsection{High-Level Operations}\label{ref:update_mesh}

We describe high-level operations to update the triangulation
$\mesh$. In particular we show how to insert and delete vertices and
how to perform edge flips. These operations are sufficient for
supporting a wide number of important algorithms including
e.g. algorithms for maintaining constrained delaynay triangulations.

\begin{figure}
\centering{
\begin{tabular}{ccc}
\includegraphics[height=1.2in,page=1]{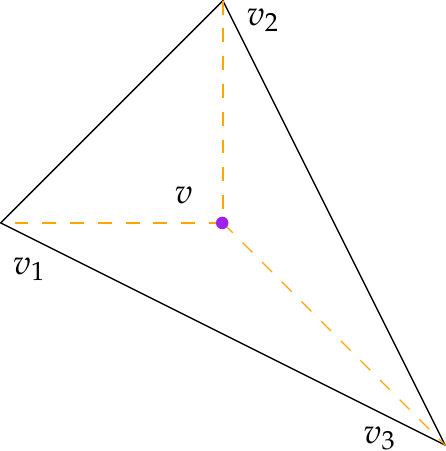}&
\includegraphics[height=1.2in,page=3]{figures/update_mesh}\\
(a)&(c)\\
\end{tabular}
}
\caption{Updating $\mesh$. (a) Inserting/deleting vertex $v$, (b)
  inserting $v$ on an edge $e$ and flipping edge
  $e=v_1v_2$.}
\label{fig:updatemesh}
\end{figure} 

\paragraph{Insert vertex} 
The $\insertv(v,r)$ operation inserts vertex $v\in\R^2-\vertices$ with
height $r\in\R$ into $\mesh$. If $v$ is not on an edge of $\mesh$ it
is contained in a triangle $t\in\faces$ with vertices $v_1$, $v_2$ and
$v_3$.  We first insert $v$ in $\mesh$ by connecting it to $v_1$,
$v_2$ and $v_3$, creating three new triangles and removing
$t$. Initially we set the height of $v$ to the elevation of $\terrain$
at $v$ (given by linear interpolation on the vertices of $t$). Note
that no point on $\terrain$ changes height as a result of this update,
and thus $\contourtree$ does not change and neither does any existing
link pointers. To update the descent tree forest of $\terrain$, we
select a vertex $u$ in $\lowerlink(v)$. Assume that $u$ belongs to
$\descendtree(x)$. We then create a descent tree vertex representing
$v$ and make it the child of $u$ in $\descendtree(x)$. Updating the
ascent tree forest is similar. Link pointers from $v$ are created by
looking at vertices in $\link(v)$ consisting of the three vertices of
$t$. Finally we invoke $\changeheight(v,r)$ to set the elevation of
$v$ correctly. Note that this works on the unbounded face as well, by
fixing the elevation of $v_{\infty}$ appropriately. A similar
procedure to the above is used when $v$ lies in the interior of an
edge $e=v_1v_2$ of $\mesh$.  Let $v_3$ and $v_4$ be the vertices
opposite $e$ in $\mesh$, refer to Figure~\ref{fig:updatemesh}(b). In
this case we need to add edges $vv_3$ and $vv_4$ to $\mesh$ and update
the data structure in a way that is similar to what is described
above.

\paragraph{Delete vertex}
The $\delete(v)$ operation deletes vertex $v\in\vertices$ from $\mesh$
where $v\neq v_{\infty}$. We assume that $|\link(v)|=3$. Let
$\link(v)=\{v_1,v_2,v_3\}$ and set $r$ to the height of the triangle
$t=v_1v_2v_3$ at $v$, found by linear interpolation. We first invoke
$\changeheight(v,r)$ to ensure that $v$ lies in the plane of $t$. This
implies that $v$ is not a critical vertex of $\terrain$ and that we
can remove $v$ wthout affecting the height of any point on $\terrain$,
and therefore also without affecting $\contourtree$. Then we simply
remove $v$ from $\mesh$ along with edges $vv_1$, $vv_2$ and $vv_3$ and
their associated faces. Assume $v$ belongs to $\descendtree(x)$, the
neighbors of $v$ in $\descendtree(x)$ either belong to $\lowerlink(v)$
or $\upperlink(v)$. We simply update $\descendtree(x)$ by removing $v$
and connecting neighbors in $\upperlink(v)$ to a neighbor in
$\lowerlink(v)$. Since $v$ is not a minima $\lowerlink(v)$ is always
nonempty. The ascent tree containing $v$ is updated in a similar
way. If either one of $v_1$, $v_2$ or $v_3$ has a lower link pointer
to $v$, we simply replace this pointer with the lower link pointer of
$v$. Upper link pointers are updated in a similar way. Similarly to
the $\insertv$ operation, $\delete(v)$ works if $v$ is on the boundary
of $\mesh$, i.e. when $v$ is adjacent to $v_{\infty}$.

\paragraph{Edge Flip}
The $\edgeflip(e)$ operation flips an edge $e=u_1u_2\in\edges$ of
$\mesh$, where $u_1,u_2\neq v_{\infty}$. Let $v_3$ and $v_4$ be the
vertices opposite $e$ in $\mesh$, refer to
Figure~\ref{fig:updatemesh}(b). If the quadrilateral $v_1v_2v_3v_4$ is
not convex, the operation is undefined. The edge flip is accomplished
by first invoking $\insertv(v,r)$ where $v$ is the intersection of $e$
and the line segment $v_3v_4$ and $r$ is found by linear interpolation
at $v$ across $v_3v_4$. We can then delete vertex $v$ using a slightly
modified version of the $\delete(v)$ operation: instead of deleting
edge $v_3v$ and $vv_4$ we merge them to create one big edge
$e'=v_3v_4$. Due to the selection of $r$, this does not affect the
height of any point on $\terrain$. Similarly to the above two
operations, the edge flip can be made to work if $v_3$ or $v_4$ is
$v_{\infty}$.

\subsection{Deforming the entire mesh}
Our data structure easily extends to maintaining $\contourtree$ as the
entire mesh varies over time since the topology changing events and
the way these are handled will be the same. Instead of maintaining
certificates involving only a single vertex $v$ of $\terrain$, we
maintain a certificate for every edge $(v,u)$ of $\terrain$ that fail
as $\height(v) = \height(u)$, and a certificate for every edge
$(\alpha,\beta)$ that fail as $\height(\alpha) = \height(\beta)$.

\subsection{Topological persistence}
Topological persistence was introduced by Edelsbrunner et al.~\cite{elz-tps-00}
and can roughly be defined as follows.  Suppose we sweep a horizontal
plane in the direction of increasing values of $\height$ and keep
track of connected components in $\sublevel$ while increasing
$\level$. A component of $\sublevel$ is started at a minimum vertex
and ends at a negative saddle vertex when it joins with an older
component. Similarly, a hole of $\sublevel$ is started at a positive
saddle and ends at a maximum vertex. Based on this it is possible to
define minimum-saddle and maximum-saddle \emph{persistence pairs}
between the critical vertex that starts a component or hole and the
one that ends it. The \emph{persistence value} of a persistence pair
is simply the height difference between the vertices i.e. it is the
difference between the height at which the corresponding component was
started and the height it was ended. In the following sections, we
will first describe the notion of join and split trees and how these
can be maintained during a $\changeheight(v,r)$ operation, and then we
will show how persistence pairs can be maintained using join and split
trees.

\subsubsection{Maintaining join and split tree}
As described the contour tree of $\terrain$ encodes the topological
changes in $\terrain_\level$ as we increase $\level$ from $-\infty$ to
$\infty$. The \emph{join tree} encodes a subset of these changes
i.e. the topological changes in $\terrain_{<\level}$ that occur as
components in $\terrain_{<\level}$ are created at minimum vertices of
$\terrain$ and merged at negative saddle vertices of
$\terrain$. Similarly, the \emph{split tree} encodes the subset of
topological changes in $\terrain_{>\level}$. As a result the events
that occur on the merge and split tree of $\terrain$ from a
$\changeheight$ operation are a subset of the events that occur on the
contour tree. For example, the events that occur on the join tree are
the events that involve the minima and negative saddle vertices of
$\terrain$ i.e. the negative interchange events, sign change events
and a subset of the local events. The algorithm described above can
therefore also be used to maintain the merge and split tree of
$\terrain$ the only difference being the reduced set of events. Note
that to actually detect sign change events we need to maintain
$\contourtree$ simultanously with $\jointree$ or augment the positive
saddle vertices of $\terrain$ to the edges of $\jointree$.

\subsubsection{Maintain topological persistence pairs}
Consider the join tree $\jointree$ of $\terrain$ and let $\alpha$ be a
negative saddle in $\jointree$. Assume that $\jointree'$ and
$\jointree''$ are subtrees of $\jointree$ rooted in the children of
$\alpha$ such that $x'$ and $x''$ are the lowest minimum nodes of
$\jointree'$ and $\jointree''$, respectively. Then the highest of $x'$
and $x''$ is paired with $\alpha$. Therefore, by augmenting every
negative saddle node $\alpha$ in $\jointree$ with a pointer to the
lowest minimum node in the subtree rooted in $\alpha$, we implicitly
represent the persistence pairs of $\terrain$. The extra information
augmented to $\jointree$ is straightforward to maintain during the
merge tree events described above. However, maintaining the extra
information in $\jointree$ creates the need for a new event that
occurs as the height of two minimum vertices becomes equal.

Let $x$ and $y$ be minimum vertices in $\terrain$ such that
$\heighttime{t_0}(x) = \heighttime{t_0}(y)$. Assume that
$\heighttime{t_0^-}(x) < \heighttime{t_0^-}(y)$, then nothing happens
unless the least common ancestor $\textsc{LCA}(x,y)$ of $x$ and $y$ in
$\jointree$ stores a pointer to $y$. If this is the case we must
update $\textsc{LCA}(x,y)$ and any direct ancestor that also points to
$x$. If the merge tree is represented as a dynamic tree, then this is
straightforward to do in $\OhOf(\log(n))$ time.


\begin{thebibliography}{10}

\bibitem{Agarwal:06}
P.~K. Agarwal, L.~Arge, and K.~Yi.
\newblock {I/O}-efficient batched union-find and its applications to terrain
  analysis.
\newblock In {\em SCG '06: Proceedings of the twenty-second annual symposium on
  Computational geometry}, pages 167--176, New York, NY, USA, 2006. ACM.

\bibitem{arge:socg2010}
L.~Arge, M.~Revsb{\ae}k, and N.~Zeh.
\newblock I/o-efficient computation of water flow across a terrain.
\newblock In {\em Symposium on Computational Geometry}, pages 403--412, 2010.

\bibitem{Carr:03}
H.~Carr, J.~Snoeyink, and U.~Axen.
\newblock Computing contour trees in all dimensions.
\newblock {\em Computational Geometry: Theory and Applications}, 24(2):75--94,
  2003.

\bibitem{carr:cg2010}
H.~Carr, J.~Snoeyink, and M.~van~de Panne.
\newblock Flexible isosurfaces: Simplifying and displaying scalar topology
  using the contour tree.
\newblock {\em Comput. Geom.}, 43(1):42--58, 2010.

\bibitem{danner:terrastream}
A.~Danner, T.~M\o{}lhave, K.~Yi, P.~K. Agarwal, L.~Arge, and H.~Mitasova.
\newblock Terrastream: From elevation data to watershed hierarchies.
\newblock In {\em Proc. ACM Symposium on Advances in Geographic Information
  Systems}, page (to appear in November), 2007.

\bibitem{edelsbrunner:socg2008}
H.~Edelsbrunner, J.~Harer, A.~Mascarenhas, V.~Pascucci, and J.~Snoeyink.
\newblock Time-varying reeb graphs for continuous space-time data.
\newblock {\em Comput. Geom.}, 41(3):149--166, 2008.

\bibitem{elz-tps-00}
H.~Edelsbrunner, D.~Letscher, and A.~Zomorodian.
\newblock {T}opological {P}ersistence and {S}implification.
\newblock In {\em Proceedings of the 41st Annual Symposium on Foundations of
  Computer Science}, pages 454--463, 2000.

\bibitem{guibas:cghandbook}
L.~Guibas.
\newblock Modeling motion.
\newblock In J.~Goodman and J.~O'Rourke, editors, {\em Handbook of Discrete and
  Computational Geometry}, pages 1117--1134. Chapman and Hall/CRC, 2nd edition,
  2004.

\bibitem{safa-mpcttvfm-14}
I.~Safa and Y.~Wang.
\newblock Maintaining persistence and contour trees for time varying functions
  on 2 or 3-manifolds.
\newblock Technical report, The Ohio State University, 2014.

\bibitem{st-dsdt-83}
D.~D. Sleator and R.~E. Tarjan.
\newblock {A} {D}ata {S}tructure for {D}ynamic {T}rees.
\newblock {\em Journal of Computer and System Sciences}, 26(3):362--391, 1983.

\bibitem{tv-cct3s-98}
S.~P. Tarasov and M.~N. Vyalyi.
\newblock Construction of contour trees in 3{D} in ${O}(n \log n)$ steps.
\newblock In {\em Proc. 14th Sympos. Comput. Geom.}, pages 68--75, 1998.

\bibitem{kobps-ctsss-97}
M.~van Kreveld, R.~van Oostrum, C.~Bajaj, V.~Pascucci, and D.~Schikore.
\newblock Contour trees and small seed sets for isosurface traversal.
\newblock pages 212--220, 1997.

\end{thebibliography}

\appendix

\end{document}